\def\nb0{{\mathbf{0}}}
\def\nb1{{\mathbf{1}}}
\def\ncalL{{\mathcal{L}}}
\def\nrmd{{\rm d}}
\newtheorem{lemma}{Lemma}
\newtheorem{theorem}{Theorem}
\newtheorem{prop}{Proposition}
\newtheorem{cor}{Corollary}
\def\E{\mathbb{E}}
\def\P{\mathbb{P}}
\def\pc{\mathtt{P_c}}
\def\ie{{\em i.e.}}
\def\R{\mathbb{R}}
\def\N{\sigma^2}
\def\T{\beta}							
\def\sinr{\mathtt{SINR}}			
\def\snr{\mathtt{SNR}}
\def\sir{\mathtt{SIR}}
\def\calA{\mathcal{A}}
\def\calK{\mathcal{K}}
\begin{document}
\graphicspath{{./Figures/}}
\title{Modeling and Analysis of $K$-Tier Downlink Heterogeneous Cellular Networks}
\author{Harpreet S. Dhillon,~\IEEEmembership{Student Member, IEEE}, Radha Krishna Ganti,~\IEEEmembership{Member, IEEE}, Fran\c{c}ois Baccelli and Jeffrey G. Andrews,~\IEEEmembership{Senior Member, IEEE}%
\thanks{Manuscript received March 7, 2011; accepted July 3, 2011. This work was supported by NSF grant CIF-1016649. A part of this paper was presented at  ITA 2011 in San Diego, CA~\cite{DhiGanC2011}.}%
\thanks{
H. S. Dhillon,  and J. G. Andrews are with WNCG, the University of Texas at Austin, USA. Email: dhillon@utexas.edu,  jandrews@ece.utexas.edu.  F. Baccelli is with Ecole Normale Superieure (ENS) and INRIA in Paris, France. Email: Francois.Baccelli@ens.fr.  R. K. Ganti was with UT Austin and is currently with the Indian Institute of Technology Madras, Chennai, India. Email: rganti@ee.iitm.ac.in.}
}

\maketitle
\begin{abstract}
Cellular networks are in a major transition from a carefully planned set of large tower-mounted base-stations (BSs) to an irregular deployment of heterogeneous infrastructure elements that often additionally includes micro, pico, and femtocells, as well as distributed antennas. In this paper, we develop a tractable, flexible, and accurate model for a downlink heterogeneous cellular network (HCN) consisting of $K$ tiers of randomly located BSs, where each tier may differ in terms of average transmit power, supported data rate and BS density. Assuming a mobile user connects to the strongest candidate BS, the resulting Signal-to-Interference-plus-Noise-Ratio ($\sinr$) is greater than 1 when in coverage, Rayleigh fading, we derive an expression for the probability of coverage (equivalently outage) over the entire network under both open and closed access, which assumes a strikingly simple closed-form in the high $\sinr$ regime and is accurate down to $-4$ dB even under weaker assumptions. For external validation, we compare against an actual LTE network (for tier 1) with the other $K-1$ tiers being modeled as independent Poisson Point Processes. In this case as well, our model is accurate to within 1-2 dB.  We also derive the average rate achieved by a randomly located mobile and the average load on each tier of BSs. One interesting observation for interference-limited open access networks is that at a given $\sinr$, adding more tiers and/or BSs neither increases nor decreases the probability of coverage or outage when all the tiers have the same target-$\sinr$.
\end{abstract}

\begin{keywords}
Femtocells, heterogeneous cellular networks, stochastic geometry, point process theory, coverage probability.
\end{keywords}

\section{Introduction}

Mathematical analysis for conventional (1-tier) cellular networks is known to be hard, and so highly simplified system models or complex system level simulations are generally used for analysis and design, respectively.  To make matters worse, cellular networks are becoming increasingly complex due to the deployment of multiple classes of BSs that have distinctly different traits \cite{Qualcomm08, AndClaJ2012}.  For example, a typical 3G or 4G cellular network already has traditional BSs that are long-range and guarantee near-universal coverage; operator-managed picocells \cite{KisGre03,WuMur04} and distributed antennas \cite{Sal87,RohPau03,ZhaAnd08,ChoAnd07} that have a more compact form factor, a smaller coverage area, and are used to increase capacity while eliminating coverage deadzones; and femtocells, which have emerged more recently and are distinguished by their end-user installation in arbitrary locations, very short range, and possibility of having a closed-subscriber group \cite{ChaAndGat08,Picochip07,ChaAnd09}.  This evolution toward heterogeneity will continue to accelerate due to crushing demands for mobile data traffic caused by the proliferation of data-hungry devices and applications.

\begin{figure}[ht!]
\centering
\includegraphics[width=.8\columnwidth]{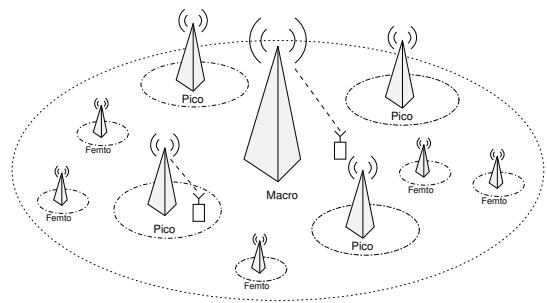}
\caption{Illustration of a three-tier heterogenous network utilizing a mix of macro, pico and femtocell BSs. Only a single macro-cell is shown for the sake of simplicity.
}
\label{fig:HCN}
\end{figure}

A straightforward unifying model for heterogeneous cellular networks (HCNs) would consist of $K$ spatially and spectrally coexisting tiers, where each tier is distinguished by its transmit power, BS density, and data rate, as shown in Fig.~\ref{fig:HCN}.  For example, traditional BSs (tier 1) would typically have a much higher transmit power and lower density and offered rate than the lower tiers (e.g. pico and femtocells).  To visualize what the coverage areas in such a network might look like, consider Figs.~\ref{fig:2HCN_PPP}-\ref{fig:3HCN_RealData}, which show average power-based (equivalently average $\sinr$-based) coverage regions for some plausible two and three tier networks.  Clearly, the coverage, rate, and reliability that mobile users experience in such networks can be expected to be quite different than in traditional cellular networks that use familiar models like the hexagonal grid.

The objective of this paper is to provide a flexible baseline model for HCNs, and to show how it can be used to provide tractable and reasonably accurate analysis of important metrics like the $\sinr$ statistics, outage probability and average rate. Those familiar with cellular network analysis will recognize that this goal is fairly ambitious since such results have been hard to come by even for traditional cellular networks.

\subsection{Related Work and Motivation}

The study and design of conventional (1-tier) cellular networks has often tended towards two extremes.  For analysis and academic research, very simplistic models are typically employed in order to maintain tractability, while for design and development (e.g. in industry) complex system-level simulations with a very large number of parameters are generally used.  This has made it difficult to estimate the actual gain that new techniques developed by researchers might provide in real systems. Well-known examples include multiuser detection \cite{VerBook}, multiuser MIMO \cite{CaiSha03}, and BS cooperation \cite{FosKar06IEE,GesHan10}; all of which promised much larger gains in theory than have been achieved thus far in practice \cite{And05,AnnGor10}.

A popular analytical model for multicell systems is the Wyner model \cite{Wyn94}, which assumes channel gains from all (usually only 1 or 2) interfering BSs are equal and thus constant over the entire cell.  Such a model does not distinguish between cell edge and interior users and in most cases does not even have a notion of outage since $\sinr$ is fixed and deterministic.  It can be tuned to reasonably model average metrics in a system with lots of averaging, such as the CDMA uplink, but is not accurate in general and particularly for systems with 1 or 2 strong interferers, like a typical OFDMA-based $4$G network \cite{XuZhaAnd10}. Another common approach is to consider only a small number of interfering cells (as few as one) and abstract the desired and interfering BSs to an interference channel \cite{Cha11,JinTse07}. Finally, perhaps the most popular and accepted model is the two-dimensional hexagonal grid model. The grid model is frequently used as the basis of system-level simulations but analysis is not generally possible \cite{CatDriGre00,GanKri97,EkiErs01}. However, both the scalability and the accuracy of grid model are questionable in the context of network heterogeneity (see Figs.~\ref{fig:2HCN_PPP}-\ref{fig:3HCN_RealData}).

A less accepted model is to allow the locations of the BSs to be drawn from a stochastic point process \cite{Bro00,BacKle97,BacZuy97}. Such a model seems sensible for femtocells -- which will take up unknown and unplanned positions -- but perhaps dubious for the higher tiers which are centrally planned. Nevertheless, as Figs.~\ref{fig:2HCN_PPP}-\ref{fig:3HCN_RealData} show, the difference between randomly placed and actual planned locations may not be as large as expected, even for tier 1 macro BSs. Indeed, the recent work \cite{AndBacGan10} showed that for a one-tier network, even with the BS locations drawn from a Poisson Point Process (PPP), the resulting network model is about as accurate as the standard grid model, when compared to an actual $4$G network. Importantly, such a model allows useful mathematical tools from stochastic geometry to be brought to bear on the problem \cite{StoKen96, BacBlaNOW, HaeAnd09}, allowing a tractable analytical model that is also accurate. This model has very recently been extended to obtain coverage results for femtocell networks when a typical mobile connects to its nearest BS~\cite{Muk11}.

\subsection{Contributions and Outcomes}
The main contributions of this paper are as follows:

\noindent {\em General $K$-tier downlink model:} In Section II, we define a tractable model for downlink multi-tier networks that captures many (but not all) of the most important  network parameters. The model consists of $K$ independent tiers of PPP distributed BSs, where each tier may differ in terms of the average transmit power, the supported data rate, and the BS density (the average number of BSs per unit area). The plausibility of the model versus planned tiers is verified through comparisons in Section V with an actual $4$G macro-cell (1 tier) network with randomly placed lower tiers.

\noindent{\em $\sinr$ distribution, coverage probability ($\pc$), outage probability ($1-\pc$):} Assuming (i) a mobile user connects to the strongest candidate BS, (ii) that the resulting Signal-to-Interference-plus-Noise-Ratio ($\sinr$) is greater than 1 when in coverage, and (iii) Rayleigh fading, we derive an expression for the probability of coverage (equivalently outage) over the entire network under both open and closed access, which allows a remarkably simple closed-form in the high $\sinr$ regime (where interference power dominates noise power) and is shown to be accurate down to -4dB even under weaker assumptions. When all tiers have the same target $\sinr$ threshold, the coverage probability is the complementary cumulative distribution function (CCDF) of effective received $\sinr$ for an arbitrary randomly located mobile user.

\noindent{\em Average Data Rate:} We derive the average data rate experienced by a randomly chosen mobile when it is in coverage, assuming interference is treated as noise but otherwise that the Shannon bound is achieved, i.e. the average rate in coverage is $\E[\log(1+\sinr)| {\rm coverage}]$. This expression is readily computable but involves an integral so is not closed-form.

Some interesting observations can be made from these results.  For example, we show that when the $\sinr$ targets are the same for all tiers in a dense network (thermal noise power negligible compared to interference power), the coverage (and hence outage) probability does not depend upon the number of tiers or the density of BSs in open access, but that $\pc$ generally decreases with both in closed-access. This means that the trend towards increased density and heterogeneity and the resulting increase in interference need not reduce the typical $\sinr$, as is commonly feared.  On the contrary, aggregate network throughput will increase linearly with the number of BSs since the $\sinr$ statistics will stay the same per cell.

We also provide the average load per tier, which is the average fraction of users served by the BSs belonging to a particular tier or equivalently the probability that a mobile user is served by that tier. In line with intuition, the per-tier load is directly proportional to the density of its BSs and their average transmit power, and inversely proportional to its $\sinr$ target.
\section{System Model}

\subsection{Heterogeneous Cellular Network Model}
We model a HCN as a $K$-tier cellular network where each tier models the BSs of a particular class, such as those of femtocells or pico-cells. The BSs across tiers may differ in terms of the transmit power, the supported data rate and their spatial density.  We assume that the  BSs in the $i$-th tier are spatially distributed as a PPP $\Phi_i$ of density $\lambda_i$, transmit at power $P_i$, and have a $\sinr$ target of $\T_i$. More precisely a mobile can reliably communicate with a BS $x$ in the $i$-th tier only if its downlink $\sinr$ with respect to that BS is greater than $\T_i$.  Thus, each tier can be uniquely defined by the tuple $\{P_i, \T_i, \lambda_i\}$.

The mobiles are also modeled by an independent PPP $\Phi_m$ of density $\lambda_m$\footnote{Since the $\sinr$ distribution is derived as a spatial average, the distribution of users is implicitly assumed to be homogeneous.}. Without loss of generality, we conduct analysis on a typical mobile user located at the origin. The fading (power) between a BS located at point $x$ and the typical mobile is denoted by $h_x$ and is assumed to be i.i.d  exponential (Rayleigh fading).  More complex channel distributions can be considered in this framework, e.g. in \cite{AndBacGan10} a general interference fading model capable of handling any statistical distribution was used, and using the Fourier integral techniques in \cite{BacBla09} general fading to the selected BS can also considered. Both of these generalizations appear to entail significantly decreased tractability, and are left to future work. The standard path loss function is given by $l(x)=\|x\|^{-\alpha}$, where  $\alpha >2$ is the path loss exponent. Hence, the received power at a typical mobile user from a BS located at point $x_i$ (belonging to $i^{th}$ tier) is  $P_i h_{x_i} \|x_i\|^{-\alpha}$, where $h_{x_i} \sim \exp(1)$. The resulting $\sinr$ expression assuming the user connects to this BS is:
\begin{equation}
\sinr(x_i) = \frac{P_i h_{x_i} \|x_i\|^{-\alpha}}{\sum_{j=1}^{K} \sum_{x \in \Phi_j \setminus x_i} P_j h_{x} \|x\|^{-\alpha}+\sigma^2},
\end{equation}
where $\sigma^2$ is the constant additive noise power. One of the ways to set the value of $\sigma^2$ is according to the desired received $\snr$ at the cell-edge. We will comment more on this in  Section \ref{sec:numerical}, where we show that self-interference dominates noise in the typical HCNs.
We assume each mobile user connects to its strongest BS instantaneously, i.e., the BS that offers the highest received $\sinr$. Mathematically the typical node at the origin is in coverage if:
\[  \max_{x \in \Phi_i}\sinr(x) > \T_i, \]
for some $1\leq i\leq K$. An assumption that greatly simplifies the analysis is that $\T_i > 1$ ( 0 dB)\footnote{This assumption is relaxed in~\cite{DhiGanC2011a} where we compute the coverage probability for general $\T_i$. This also enabled us to compute the ergodic rate for a typical mobile user.}. The following Lemma shows that under this assumption, at most one BS in the entire network can provide $\sinr$ greater than the required threshold.  Although some users in commercial cellular networks indeed have operating $\sinr$ below 0 dB, they are in a distinct minority (cell edge users) and in Section V we show numerically that this model holds very accurately at least to $-4$ dB, which covers cell edge users as well. The following Lemma characterizes the number of potential BSs that a mobile can connect to and will be used in the later sections. 
\begin{lemma}
\label{Thm:0}
 Given positive real numbers $\{a_1, a_2 \ldots a_n\}$, which correspond to the received power from each BS at the typical mobile user and defining $c_i = \frac{a_i}{\sum_{j\ne i}a_j +\sigma^2}$, which corresponds to the $\sinr$ of the $i^{th}$ BS, at most $m$ $c_i$'s can be greater than $1/m$ for any positive integer $m$.
\end{lemma}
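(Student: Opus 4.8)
The plan is to convert the defining relation for each $c_i$ into a statement about the $a_i$ alone. Write $S=\sum_{j=1}^{n}a_j$ for the total received power. Then $c_i>1/m$ means $a_i>\tfrac1m\bigl(S-a_i+\sigma^2\bigr)$; clearing the denominator and collecting the $a_i$ terms gives the clean equivalent condition
\[
c_i>\frac1m \qquad\Longleftrightarrow\qquad a_i>\frac{S+\sigma^2}{m+1}.
\]
So ``$c_i$ exceeds $1/m$'' is the same as ``$a_i$ exceeds the fixed fraction $\tfrac{1}{m+1}$ of $S+\sigma^2$.'' This already strongly suggests a counting bound, since only so many of the $a_i$ can each be a large fraction of $S$.

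Next I would argue by contradiction. Suppose some index set $I$ of size $|I|=m+1$ has $c_i>1/m$ for every $i\in I$. Summing the equivalent inequality over $i\in I$ yields
\[
\sum_{i\in I}a_i \;>\; (m+1)\cdot\frac{S+\sigma^2}{m+1} \;=\; S+\sigma^2 .
\]
On the other hand, because the $a_j$ are positive and $\sigma^2\ge 0$, any partial sum is bounded by the whole: $\sum_{i\in I}a_i\le\sum_{j=1}^{n}a_j=S\le S+\sigma^2$. The two displays are contradictory, so no such $I$ exists and at most $m$ of the $c_i$ can be greater than $1/m$.

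There is no real obstacle here; the only points requiring a little care are keeping the inequalities strict in the right places so that the final comparison $S+\sigma^2<\sum_{i\in I}a_i\le S+\sigma^2$ is a genuine contradiction, and noting that the argument uses only $\sigma^2\ge 0$, so it applies verbatim to the interference-limited case $\sigma^2=0$. Specializing to $m=1$ recovers the statement used in the rest of the paper: at most one BS in the network can deliver $\sinr$ above a common threshold $\T_i>1$.
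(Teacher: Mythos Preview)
Your proof is correct and follows essentially the same summing/counting argument as the paper: both derive a contradiction from the fact that the $a_i$ together cannot exceed $S$ (equivalently, the paper's identity $\sum_i \frac{b_i}{1+b_i}=\sum_i a_i/S=1$). The only difference is cosmetic: the paper first drops $\sigma^2$ by bounding $c_i\le b_i=a_i/(S-a_i)$ and then argues with the normalized sum, whereas you keep $\sigma^2$ throughout and obtain the equivalent threshold $a_i>(S+\sigma^2)/(m+1)$ directly, which is slightly cleaner.
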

\begin{proof}
See Appendix.
\end{proof}

\begin{figure}
\centering
\includegraphics[width=0.7\columnwidth]{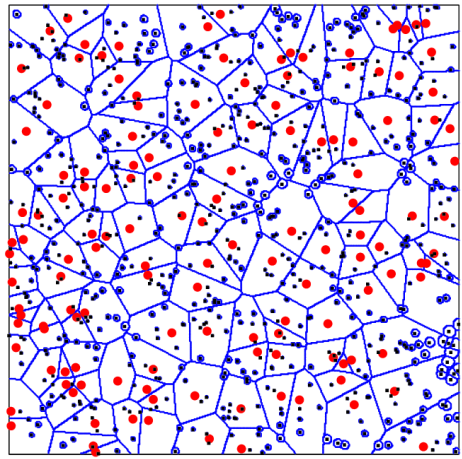}
\caption{Coverage regions in a two-tier network as per the model used in this paper. Both macro (large circles) and femto (small dark squares) BSs are distributed as independent PPPs with $P_1 = 1000P_2$ and $\lambda_2 = 5\lambda_1$.}
\label{fig:2HCN_PPP}
\end{figure}
\begin{figure}
\centering
\includegraphics[width=0.7\columnwidth]{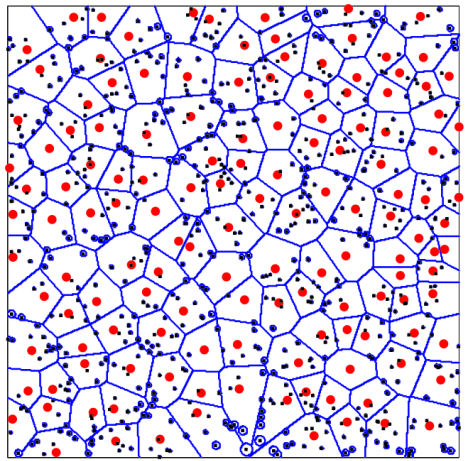}
\caption{Coverage regions in a two-tier network where Macro (tier-1) BS locations (large circles) correspond to actual $4$G deployment. Femto BSs (small dark squares) are distributed as a PPP ($P_1 = 1000P_2$ and $\lambda_2 = 5\lambda_1$).}
\label{fig:2HCN_RealData}
\end{figure}

\subsection{Coverage Regions}
Before going into the analysis and main results, it may be helpful to first build some intuition about the proposed model, and its resulting coverage regions. The illustrative HCN coverage regions can be visually plotted in two steps, resulting in Figs.~\ref{fig:2HCN_PPP}-\ref{fig:3HCN_RealData}. First, we randomly place $K$ different types of BSs on a $2$-D plane according to the aforementioned independent PPPs. Ignoring fading, the space is then fully tessellated following the maximum $\sinr$ connectivity model, which is equivalent to maximum $\sir$ and maximum power connectivity models in the absence of fading. Please note that in reality the cell boundaries are not as well defined as shown in these coverage regions due to fading. Therefore, these plots can be perceived as the average coverage footprints over a period of time so that the effect of fading is averaged out. Due to the differences in the transmit powers over the tiers, these average coverage plots do \emph{not} correspond to a standard Voronoi tessellation (also called a Dirichlet tessellation)~\cite{Aur91}. Instead, they closely resemble a \textit{circular Dirichlet tessellation}, also called a multiplicatively weighted Voronoi diagram \cite{AshBol86}. The coverage regions for a two-tier network -- for example comprising macro and femtocells -- are depicted in Figs.~\ref{fig:2HCN_PPP} and \ref{fig:2HCN_RealData} for two cases: 1) the macro-cell BSs are distributed according to PPP (our model), and 2) the macro-cell BSs correspond to an actual $4$G deployment over a relatively flat urban region. The femtocells are distributed according to an independent PPP in both cases. Qualitatively, the coverage regions are quite similar in the two cases.

In Figs.~\ref{fig:3HCN_PPP} and \ref{fig:3HCN_RealData}, the coverage regions are now shown with an additional pico-cell tier.  As is the case in the actual networks, we assume that the macro-cells have the highest and the femtocells have the lowest transmit power, with pico-cells somewhere in between. For example, in LTE \cite{LTEBook}, typical values are on the order of 50W, .2W, and 2W, respectively.  Therefore, femtocell coverage regions are usually much smaller than the other two tiers, particularly when they are nearby a higher power BS.   Similarly, we observe that the coverage footprint of pico-cells increases when they are farther from the macro BSs. These observations highlight the particularly important role of smaller cells where macrocell coverage is poor.

\begin{figure}
\centering
\includegraphics[width=0.7\columnwidth]{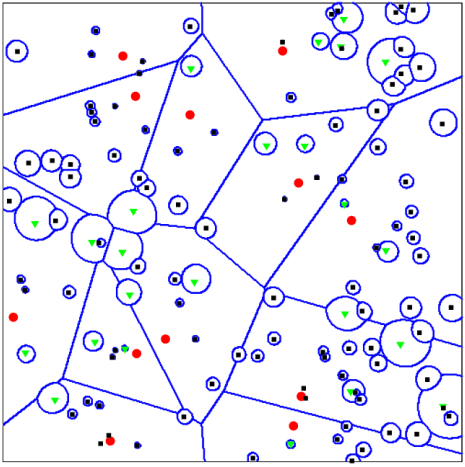}
\caption{Close-up view of coverage regions in a three-tier network. All the tiers, i.e., tier-1 macro (large circles), tier-2 pico (light triangles) and tier-3 femto (small dark squares), are modeled as independent PPPs. $P_1 = 100P_2 = 1000P_3$, $\lambda_3 = 4\lambda_2=8\lambda_1$. }
\label{fig:3HCN_PPP}
\end{figure}

\begin{figure}
\centering
\includegraphics[width=0.7\columnwidth]{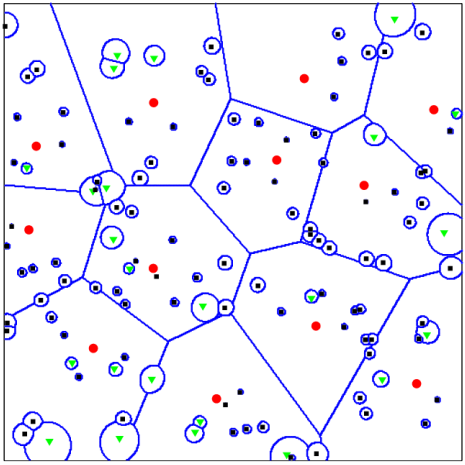}
\caption{Coverage regions in a three-tier network where macro BS locations (large circles) now correspond to actual $4$G deployment. Other parameters are same as Fig.~\ref{fig:3HCN_PPP}}.
\label{fig:3HCN_RealData}
\end{figure}

\subsection{Applicability of the Model}

The model is applicable both to non-orthogonal (CDMA\footnote{For CDMA networks, although the \emph{received} $\sinr$ is generally much smaller than 1, the post-despreading $\sinr$, which is what determines coverage/outage, is often greater than or at least close to 1, so our model and Lemma 1 are still reasonable if the interference term is divided by a spreading factor $M$.}) and orthogonal (TDMA, OFDMA) cellular networks. The analysis is for a single frequency band and assumes that all BSs are transmitting continuously in all time slots at constant power, although if some fraction $f$ of time slots were not used (at random), then the resulting density of interfering BSs would simply be $(1-f)\lambda$ and the analysis could be extended. In OFDMA-based networks, it is desirable to move strongly interfering neighbors or tiers to orthogonal resources in time and/or frequency and so the coverage can be improved.  Similarly, additional enhancements like opportunistic scheduling or multiple antenna communication should increase coverage and/or rate and this framework could be extended to indicate the gains of different approaches. Although we do not explicitly consider antenna sectoring, it can be easily incorporated in the current model if sectoring is done randomly. If the beam is partitioned into $n$ equal sectors, the density of interfering BSs reduces by a factor of $n$ because the probability that the beam of any BS would point towards a randomly chosen BS is $1/n$. Cellular engineers will note that further details are missing from this model. In addition to shadowing, we do not consider frequency reuse, power control, or any other form of interference management, leaving these to future extensions. In short, this is a baseline tractable model for HCNs.

\section{Coverage Probability and Average Load per Tier}

A typical mobile user is said to be in coverage if it is able to connect to at least one BS with $\sinr$ above its threshold. In the case when all the tiers have same $\sinr$ threshold $\T>1$, coverage probability is precisely the complementary cumulative distribution function (CCDF) of the effective received $\sinr$, outage being the CDF, i.e., $1-\text{CCDF}$.  With this understanding, we now derive the probability of coverage for a randomly located mobile user both for \textit{open} and \textit{closed access} networks (defined below). Using these results, we also derive a measure of average load per tier in terms of the fraction of users served by each tier.

\subsection{Open Access}
We first assume the open access strategy where a typical mobile user is allowed to connect to any tier without any restriction. Under the current system model, this strategy reduces to choosing the strongest BS, i.e., the one that delivers the maximum received $\sinr$.

\subsubsection{Coverage Probability}
The main result for the probability of coverage in open access networks is given by Theorem~\ref{Thm:1}.

\begin{theorem}[General case] When $\T_i>1$, the coverage probability for a typical randomly located mobile user in open access is
\begin{align}
&\pc(\{\lambda_i\}, \{\T_i\}, \{P_i\})= \sum_{i=1}^{K} \lambda_i  \int_{\R^2}\exp\Big(-C(\alpha)\left(\frac{\T_i}{P_i}\right)^{2/\alpha}\nonumber\\
&  \|x_i\|^2\sum_{m=1 }^K \lambda_m P_m^{2/\alpha}  \Big)\exp\Big(-\frac{\T_i \N}{P_i }\|x_i\|^\alpha \Big) \nrmd  x_i,
\label{eqn:Pc}
\end{align}
where $C(\alpha) = 2\pi^2\csc(\frac{2\pi}{\alpha}) \alpha^{-1}$.
%
\label{Thm:1}
\end{theorem}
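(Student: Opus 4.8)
\emph{Proof plan.} The plan is to use the disjointness provided by Lemma~\ref{Thm:0} to reduce $\pc$ to a sum of per-BS coverage probabilities, and then to evaluate each term by Palm calculus for the Poisson point processes together with the Laplace transform of the aggregate interference, which is tractable under Rayleigh fading. The resulting formula generalizes the single-tier coverage expression of \cite{AndBacGan10}.

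First, since $\T_i>1$ for all $i$, Lemma~\ref{Thm:0} with $m=1$ guarantees that at most one BS in the whole network can have $\sinr$ above its threshold. Consequently the coverage indicator equals $\sum_{i=1}^K\sum_{x\in\Phi_i}\mathbf{1}\{\sinr(x)>\T_i\}$ almost surely (the right-hand side takes values in $\{0,1\}$), so $\pc = \sum_{i=1}^K \E\big[\sum_{x\in\Phi_i}\mathbf{1}\{\sinr(x)>\T_i\}\big]$. Applying the Campbell--Mecke formula for the PPP $\Phi_i$ and then Slivnyak's theorem (the served BS does not contribute to its own interference, so the reduced Palm law of the interference field coincides with the original law), the $i$-th term becomes $\lambda_i\int_{\R^2}\P\big(\sinr(x)>\T_i\big)\,\nrmd x$, where in $\sinr(x)$ the interference $I=\sum_{j=1}^K\sum_{y\in\Phi_j}P_j h_y\|y\|^{-\alpha}$ is generated by an independent superposition of the $K$ PPPs.

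Next, I would use the exponential (Rayleigh) fading of $h_x$: the event $\sinr(x)>\T_i$ is $\big\{h_x > \T_i\|x\|^\alpha(I+\N)/P_i\big\}$, so conditioning on $I$ and integrating out $h_x$ gives
\[
\P\big(\sinr(x)>\T_i\big) = \exp\!\Big(-\frac{\T_i\N}{P_i}\|x\|^\alpha\Big)\,\mathcal{L}_I\!\Big(\frac{\T_i\|x\|^\alpha}{P_i}\Big).
\]
Because the tiers are independent, $\mathcal{L}_I$ factorizes over $j$, and the probability generating functional of each PPP gives $\mathcal{L}_{I_j}(s)=\exp\big(-\lambda_j\int_{\R^2}(1-(1+sP_j\|y\|^{-\alpha})^{-1})\,\nrmd y\big)$. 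Passing to polar coordinates and substituting $\|y\|\mapsto(sP_j)^{1/\alpha}t$ reduces the inner integral to $2\pi(sP_j)^{2/\alpha}\int_0^\infty \frac{t}{1+t^\alpha}\,\nrmd t$, and the last integral equals $\frac{\pi}{\alpha}\csc\!\big(\frac{2\pi}{\alpha}\big)$ --- this is where $\alpha>2$ enters, via $\int_0^\infty u^{a-1}(1+u)^{-1}\,\nrmd u=\pi\csc(\pi a)$ for $a=2/\alpha\in(0,1)$. Hence $\mathcal{L}_I(s)=\exp\big(-C(\alpha)s^{2/\alpha}\sum_{j=1}^K\lambda_j P_j^{2/\alpha}\big)$ with $C(\alpha)=2\pi^2\csc(\frac{2\pi}{\alpha})\alpha^{-1}$. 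Substituting $s=\T_i\|x\|^\alpha/P_i$, so that $s^{2/\alpha}=(\T_i/P_i)^{2/\alpha}\|x\|^2$, and summing over $i$ then yields exactly~\eqref{eqn:Pc}.

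I expect the crux of the argument to be the very first step: the collapse of the union over an infinite point process into a sum of disjoint events rests entirely on $\T_i>1$ and Lemma~\ref{Thm:0}, and it is precisely this that makes a closed form possible --- for general $\T_i\le 1$ one would face genuine inclusion--exclusion across the point process and lose tractability. The remaining steps are the now-standard Poisson interference computation; the only mild analytic points are the interchange of expectation and summation (justified by nonnegativity/Tonelli), the use of Slivnyak's theorem, and the convergence of the interference integral, which requires $\alpha>2$.
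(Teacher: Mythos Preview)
Your proposal is correct and follows essentially the same route as the paper: Lemma~\ref{Thm:0} to turn the union into a disjoint sum, Campbell--Mecke (with Slivnyak) to pass to an integral, the exponential fading trick to obtain the Laplace transform of the interference, and the PPP PGFL together with the $\int_0^\infty u^{a-1}/(1+u)\,\nrmd u=\pi\csc(\pi a)$ identity to arrive at $C(\alpha)$. If anything, you are slightly more explicit than the paper about the role of Slivnyak's theorem and about why $\alpha>2$ is needed for the interference integral to converge.
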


\begin{IEEEproof}
See Appendix.
\end{IEEEproof}
Theorem~\ref{Thm:1} gives a simple and fairly general expression for coverage probability. For better understanding of the proof, we now provide a brief description of the main steps. First recall that a mobile user is in coverage if it is able to connect to at least one BS with $\sinr$ above its threshold. Now assuming, $\T_i>1$, $\forall\ i$, we know from Lemma~\ref{Thm:0} that a mobile can connect to at most one BS. Therefore, $\pc$ can now be defined as the sum of the probabilities that each BS connects to the mobile (with the understanding that all the events are mutually exclusive and at most one of them happens at any time). This leads to a sum of probabilities over PPP, which can be converted to a simple integral of Laplace transform of cumulative interference using Campbell-Mecke Theorem~\cite{StoKen96}. A closed form expression for the Laplace transform can be evaluated in two main steps. Firstly, the nature of interference function (sum over PPP) leads to a product form for its Laplace transform. Using probability generating functional (PGFL) of PPP~\cite{StoKen96} and the fact that fading power is exponentially distributed, we arrive at the closed form expression for Laplace transform which directly leads to the final result of the Theorem. This result can be simplified further for the interference-limited case, where it reduces to a remarkably simple closed-form expression given by Corollary~\ref{Cor:1}.

\begin{cor}[No-noise]
\label{Cor:1}
In an interference limited network,  \ie, when self-interference dominates thermal noise, the coverage probability of a typical mobile user simplifies to
\[\pc(\{\lambda_i\}, \{\T_i\}, \{P_i\}) = \frac{\pi}{C(\alpha)}  \frac{\sum_{i=1}^K \lambda_i P_i^{2/\alpha} \T_i^{-2/\alpha}}{\sum_{i=1}^K \lambda_i P_i^{2/\alpha}   }, \ \ \beta_i>1.\]
\end{cor}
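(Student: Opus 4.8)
The plan is to obtain the corollary by simply specializing the general formula \eqref{eqn:Pc} of Theorem~\ref{Thm:1} to the regime in which the thermal-noise term is negligible. In \eqref{eqn:Pc} the noise enters only through the factor $\exp\big(-\frac{\T_i\N}{P_i}\|x_i\|^\alpha\big)$; setting $\N=0$ (or, reading ``interference-limited'' as the limit $\N\downarrow 0$) makes this factor equal to $1$. In the limiting reading one passes $\N\downarrow 0$ through the finite sum over $i$ and through the integral by dominated convergence: for every $\N\ge 0$ the integrand is bounded above by the $\N=0$ integrand
\[
\exp\Big(-C(\alpha)\Big(\frac{\T_i}{P_i}\Big)^{2/\alpha}\|x_i\|^2\sum_{m=1}^{K}\lambda_m P_m^{2/\alpha}\Big),
\]
which is integrable over $\R^2$. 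Either way one is left with
\[
\pc=\sum_{i=1}^{K}\lambda_i\int_{\R^2}\exp\Big(-C(\alpha)\Big(\frac{\T_i}{P_i}\Big)^{2/\alpha}\|x_i\|^2\sum_{m=1}^{K}\lambda_m P_m^{2/\alpha}\Big)\,\nrmd x_i .
\]

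Next I would evaluate each of the $K$ planar integrals. Each one is a radially symmetric Gaussian, so using the elementary identity $\int_{\R^2}e^{-a\|x\|^2}\,\nrmd x=\pi/a$ (valid for any $a>0$, e.g. via polar coordinates) with
\[
a_i=C(\alpha)\Big(\frac{\T_i}{P_i}\Big)^{2/\alpha}\sum_{m=1}^{K}\lambda_m P_m^{2/\alpha}
\]
gives $\int_{\R^2}e^{-a_i\|x_i\|^2}\,\nrmd x_i=\pi/a_i$. Substituting back and using $\big(\T_i/P_i\big)^{-2/\alpha}=P_i^{2/\alpha}\T_i^{-2/\alpha}$, the $i$-th summand becomes
\[
\lambda_i\,\frac{\pi}{a_i}=\frac{\pi}{C(\alpha)}\;\frac{\lambda_i P_i^{2/\alpha}\T_i^{-2/\alpha}}{\sum_{m=1}^{K}\lambda_m P_m^{2/\alpha}} .
\]
The denominator $\sum_{m}\lambda_m P_m^{2/\alpha}$ does not depend on the summation index $i$, so it factors out of the sum over $i$, producing exactly
\[
\pc=\frac{\pi}{C(\alpha)}\;\frac{\sum_{i=1}^{K}\lambda_i P_i^{2/\alpha}\T_i^{-2/\alpha}}{\sum_{i=1}^{K}\lambda_i P_i^{2/\alpha}},
\]
with the constraint $\T_i>1$ inherited from Theorem~\ref{Thm:1} (where it is needed so that Lemma~\ref{Thm:0} guarantees at most one serving BS, hence the disjoint-union decomposition behind \eqref{eqn:Pc}).

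There is essentially no deep obstacle: once the noise exponential is dropped, the computation is a one-line Gaussian integral. The only point deserving a word of care is the interpretation of ``interference-limited'' --- if it is taken as the limit $\N\to 0$ rather than the substitution $\N=0$, one should record the uniform integrable domination above to justify interchanging the limit with the finite sum and the integral, which is routine. It is worth emphasizing the clean reading of the result: the numerator is a $\T_i$-weighted ``effective density'' of the superposed tiers and the denominator the corresponding unweighted one, which is what makes the expression so compact and, in particular, shows that when all tiers share a common target $\T_i\equiv\T$ the coverage probability collapses to $\frac{\pi}{C(\alpha)}\,\T^{-2/\alpha}$, independent of $K$ and of the individual densities $\lambda_i$.
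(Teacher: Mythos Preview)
Your proposal is correct and is exactly the approach the paper takes: the paper's proof of Corollary~\ref{Cor:1} is the single line ``Follows from Theorem~\ref{Thm:1} with $\sigma^2=0$,'' and you have simply filled in the routine Gaussian-integral evaluation that this entails. The extra remarks on dominated convergence for the $\N\downarrow 0$ interpretation and on the $\T_i\equiv\T$ special case are fine elaborations but not needed for the corollary as stated.
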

\begin{proof}
Follows from Theorem \ref{Thm:1} with $\sigma^2=0$.
\end{proof}
The simplicity of this result leads to some important observations. Firstly, setting $K=1$ leads to the single-tier case, where the coverage probability is given by:
\begin{equation}
\pc(\lambda, \T, P) = \frac{\pi}{C(\alpha)\T^{2/\alpha}}.
\label{eq:1tierPc}
\end{equation}
From \eqref{eq:1tierPc}, we note that the $\pc$ in an interference-limited single-tier network is independent of the density of the BSs, and is solely dependent upon the target Signal-to-Interference-power-Ratio ($\sir$)\footnote{When the system is interference-limited, $\sinr$ and $\sir$ can be used interchangeably since thermal noise is negligible compared to the interference power. However, for concreteness we will henceforth use $\sir$ when assuming interference-limited network.}. This is consistent with~\cite{AndBacGan10}, where a similar observation was made for a single-tier network using \textit{nearest neighbor connectivity model}. The intuition behind this observation is that the change in the density of BSs leads to the change in the received and interference powers with the same factor and hence the effects cancel.

From Corollary \ref{Cor:1}, it also follows that, if $\T_i = \T$, $\forall\ i$, in an interference-limited network then  $\pc(\{\lambda_i\}, \T, \{P_i\}) = \frac{\pi}{C(\alpha)\T^{2/\alpha}}$.
This is perhaps an unexpected result since it states that the coverage probability is not affected by the number of tiers or their relative densities and transmit powers in an interference-limited network. In fact, it is exactly the same as that of the single-tier case. Therefore, more BSs can be added in any tier without affecting the coverage and hence the net \textit{network capacity} can be increased linearly with the number of BSs. The intuition behind this result is that the decision of a mobile user to connect to a BS depends solely on the received $\sir$ from that BS and a common target $\sir$, unlike the general case where it also depends upon the tier to which the BS belongs. Thus, the mobile user does not differentiate between the tiers when the $\sir$ thresholds are the same. Surprisingly, this leads to a situation similar to the one discussed for the single-tier case above, where the change in the received power due to the change in the density or the transmit power of BSs of some tier is equalized by the change in the interference power. Significantly, this implies that the interference from smaller cells, such as femtos and picos, need not decrease network performance in open access networks.

\subsubsection{Average Load per Tier}
The average load on each tier is defined as the average fraction of users in coverage served by that tier. This can also be interpreted as the average fraction of time for which each mobile is connected to the BSs belonging to a particular tier. The main result for the average load per tier in \textit{open access} is given by Proposition~\ref{Thm:3}.
\begin{prop}
\label{Thm:3}
The average fraction of users served by $j^{th}$ tier (also the average load on $j^{th}$ tier) in open access is
\begin{align*}
\bar{N}_j &= \frac{\lambda_j}{\pc(\{\lambda_i\}, \{\T_i\}, \{P_j\})}     \int_{\R^2}\exp\Big(-C(\alpha)\left(\frac{\T_i}{P_i}\right)^{2/\alpha}\nonumber\\
&||x_i||^2\sum_{m=1 }^K \lambda_m P_m^{2/\alpha}  \Big)\exp\Big(-\frac{\T_i \N}{P_i }||x_i||^\alpha \Big) \nrmd  x_i.
\end{align*}
%
\end{prop}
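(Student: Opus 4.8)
The plan is to recognize that the average load on tier $j$ is, by definition, the ratio of the spatial density of mobiles served by tier $j$ to the spatial density of mobiles in coverage; since $\Phi_m$ is stationary and independent of the BS processes, this ratio equals $\P_j/\pc$, where $\P_j$ denotes the probability that the typical mobile at the origin is served by (i.e.\ connects to) some BS of the $j$-th tier. Because Theorem~\ref{Thm:1} already supplies $\pc$ as a sum of $K$ terms, it suffices to identify $\P_j$ with the $j$-th summand of that sum.

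First I would write $\P_j=\P\big(\exists\,x\in\Phi_j:\ \sinr(x)>\T_j\big)$. Since $\T_i>1$ for every $i$, Lemma~\ref{Thm:0} (applied with $m=1$) guarantees that at most one BS in the whole network can deliver $\sinr$ above its threshold; in particular the events $\{\sinr(x)>\T_j\}$, $x\in\Phi_j$, are pairwise disjoint, so the probability of their union equals the expected count, $\P_j=\E\big[\sum_{x\in\Phi_j}\mathbf{1}\{\sinr(x)>\T_j\}\big]$. Next I would apply the Campbell--Mecke formula to the PPP $\Phi_j$ and invoke Slivnyak's theorem to replace the reduced Palm expectation by the ordinary one with an extra atom placed at $x$, giving
\[
\P_j=\lambda_j\int_{\R^2}\P\big(\sinr(x)>\T_j\big)\,\nrmd x,
\]
where $\P(\sinr(x)>\T_j)$ is now computed for a single BS at distance $\|x\|$ against interference from all tiers.

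For fixed $x$, conditioning on the aggregate interference $I=\sum_{m=1}^{K}\sum_{y\in\Phi_m}P_m h_y\|y\|^{-\alpha}$ and using $h_x\sim\exp(1)$ gives $\P(\sinr(x)>\T_j)=\E\big[\exp(-\T_j\|x\|^\alpha(I+\N)/P_j)\big]$, i.e.\ the Laplace transform of $I$ evaluated at $s=\T_j\|x\|^\alpha/P_j$ times the noise factor $\exp(-\T_j\N\|x\|^\alpha/P_j)$. This is exactly the evaluation already carried out inside the proof of Theorem~\ref{Thm:1}: the independence across tiers factors $\L_I$ into a product, each factor is handled by the PGFL of the corresponding PPP together with the exponential fading, and the result is $\L_I(s)=\exp\big(-C(\alpha)s^{2/\alpha}\sum_{m=1}^K\lambda_m P_m^{2/\alpha}\big)$. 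Substituting $s=\T_j\|x\|^\alpha/P_j$ (so that $s^{2/\alpha}=(\T_j/P_j)^{2/\alpha}\|x\|^2$) reproduces verbatim the integrand in the statement, and dividing $\P_j$ by $\pc$ from Theorem~\ref{Thm:1} completes the argument.

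There is no genuinely hard step here: the proposition is essentially a re-reading of the proof of Theorem~\ref{Thm:1}, in which one simply refrains from summing over the tier index. The only place demanding care is the Palm-calculus bookkeeping at the start — correctly phrasing the ``served by tier $j$'' event as a sum over $\Phi_j$, and noting that it is precisely the disjointness furnished by Lemma~\ref{Thm:0} that lets one pass from the probability of a union to an expected count before applying Campbell--Mecke; once that is in place the remaining computation is a direct specialization of the Laplace-transform evaluation used for Theorem~\ref{Thm:1}.
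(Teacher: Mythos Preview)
Your proposal is correct and follows essentially the same approach as the paper: both identify $\bar{N}_j$ as the ratio $\P(\text{tier }j\text{ covers})/\pc$, invoke Lemma~\ref{Thm:0} to pass from a union to an expected count, and then reuse the Laplace-transform evaluation from Theorem~\ref{Thm:1}. The only minor difference is that the paper is more explicit about the first step, writing $\bar{N}_j$ as an ergodic spatial average over an increasing window and then using stationarity/ergodicity of the PPP together with Slivnyak's theorem on $\Phi_m$ to reduce it to the conditional probability at the origin, whereas you assert the ratio form directly; your acknowledgment of the ``Palm-calculus bookkeeping'' covers this, so the argument is sound.
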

\begin{IEEEproof}
See Appendix.
\end{IEEEproof}
In an interference-limited scenario, this result reduces to a simple closed form expression, which is given by the following Corollary.
\begin{cor}
\label{Cor:2}

When noise is neglected, \ie, $\sigma^2 =0$,
\[\bar{N}_j =\frac{\lambda_j P_j^{2/\alpha} \T_j^{-2/\alpha}}{\sum_{i=1}^K \lambda_i P_i^{2/\alpha} \T_i^{-2/\alpha}}.\]
\end{cor}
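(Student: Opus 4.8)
The plan is to derive Corollary~\ref{Cor:2} as a direct specialization of Proposition~\ref{Thm:3} to the interference-limited regime, combined with the closed form of $\pc$ from Corollary~\ref{Cor:1}. Setting $\N = 0$ in the expression for $\bar{N}_j$ in Proposition~\ref{Thm:3} removes the second exponential factor $\exp\!\big(-\tfrac{\T_j \N}{P_j}\|x_j\|^\alpha\big)$, leaving
\[
\bar{N}_j = \frac{\lambda_j}{\pc(\{\lambda_i\},\{\T_i\},\{P_i\})} \int_{\R^2} \exp\!\Big(-C(\alpha)\Big(\tfrac{\T_j}{P_j}\Big)^{2/\alpha}\|x_j\|^2 \sum_{m=1}^K \lambda_m P_m^{2/\alpha}\Big)\,\nrmd x_j .
\]
The integrand is a radially symmetric Gaussian on $x_j \in \R^2$, so I would pass to polar coordinates (equivalently use the identity $\int_{\R^2} e^{-a\|x\|^2}\,\nrmd x = \pi/a$, valid for any $a>0$) to evaluate the integral in closed form as $\pi P_j^{2/\alpha}\T_j^{-2/\alpha}\big/\big(C(\alpha)\sum_{m} \lambda_m P_m^{2/\alpha}\big)$, using $(\T_j/P_j)^{-2/\alpha} = P_j^{2/\alpha}\T_j^{-2/\alpha}$.

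Next I would substitute the no-noise coverage probability from Corollary~\ref{Cor:1}, namely $\pc = \tfrac{\pi}{C(\alpha)}\,\tfrac{\sum_i \lambda_i P_i^{2/\alpha}\T_i^{-2/\alpha}}{\sum_i \lambda_i P_i^{2/\alpha}}$, into the denominator. Multiplying through, the factors $\pi$, $C(\alpha)$, and $\sum_m \lambda_m P_m^{2/\alpha}$ all cancel, leaving exactly
\[
\bar{N}_j = \frac{\lambda_j P_j^{2/\alpha}\T_j^{-2/\alpha}}{\sum_{i=1}^K \lambda_i P_i^{2/\alpha}\T_i^{-2/\alpha}},
\]
which is the claimed formula.

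There is no substantial obstacle here: the argument is a one-line specialization followed by a Gaussian integral and a cancellation. The only points needing a moment's care are (i) confirming the integral is finite, which holds since the quadratic-exponential coefficient $a = C(\alpha)(\T_j/P_j)^{2/\alpha}\sum_m \lambda_m P_m^{2/\alpha}$ is strictly positive whenever at least one tier is present; (ii) keeping the tier index consistent, i.e.\ reading the integrand of Proposition~\ref{Thm:3} with $j$ throughout so that it matches the left-hand side $\bar{N}_j$; and (iii) noting that, as in Corollary~\ref{Cor:1}, the chain of reasoning via Lemma~\ref{Thm:0} presumes $\T_j>1$. As a sanity check one can verify $\sum_{j=1}^K \bar{N}_j = 1$, as required since $\{\bar{N}_j\}$ are the probabilities that a covered user is served by each tier.
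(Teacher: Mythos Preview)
Your proposal is correct and follows exactly the implicit derivation the paper intends: Corollary~\ref{Cor:2} is stated without a separate proof precisely because it is the straightforward specialization of Proposition~\ref{Thm:3} to $\sigma^2=0$, evaluating the resulting Gaussian integral and dividing by the no-noise $\pc$ from Corollary~\ref{Cor:1}. Your observation about reading the integrand of Proposition~\ref{Thm:3} with index $j$ throughout is also on point, since the $i$ appearing there is a typographical slip in the paper.
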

From Corollary~\ref{Cor:2}, we observe that the load on each tier is directly proportional to the quantity $\lambda_j P_j^{2/\alpha} \T_j^{-2/\alpha}$. In line with intuition, a tier will serve more users if it has a higher BS density or higher transmit power or a lower $\sir$ threshold. When the thresholds of all tiers are equal to $\T$ and the transmit powers of all BSs equal to $P$, the average load on each tier is
$\bar{N}_j = \frac{\lambda_j}{\sum_{i=1}^K\lambda_i}$.
Hence, as expected the average load on each tier is directly proportional to the density of its BSs.

\subsection{Closed Access}
Under closed access, also known as a closed subscriber group, a mobile user is allowed to connect to only a subset of tiers and the rest of the tiers act purely as interferers.  The motivation for closed access particularly applies to privately owned infrastructure, such as femtocells or perhaps custom picocells mounted on a company's roof to improve service to their staff.  The desirable aspects of closed access can include protection of finite backhaul capacity, security, and the reduction in the frequency of handoffs experienced by mobile users and the associated overhead required.  In the context of our model, closed access means that if the strongest BS lies in the restricted tier, it by definition leads to an outage event irrespective of the received $\sinr$ associated with that BS.  Furthermore, since closed access is a constraint on connectivity, it would naturally lead to reduced coverage probability.  This intuition is verified in the following discussion.

\subsubsection{Coverage Probability}
The main result of coverage probability in closed access networks is given by Lemma \ref{Thm:5}.
\begin{lemma}
\label{Thm:5}
When a typical  mobile user is allowed to connect to only a subset $\mathcal{B} \subset \{1,2,,\hdots,K\}$,  the coverage probability for closed access is
\begin{align}
&\pc(\{\lambda_i\}, \{\T_i\}, \{P_i\})= \sum_{i\in \mathcal{B}} \lambda_i  \int_{\R^2}\exp\Big(-C(\alpha)\left(\frac{\T_i}{P_i}\right)^{2/\alpha}\nonumber\\
&  \|x_i\|^2\sum_{m=1 }^K \lambda_m P_m^{2/\alpha}  \Big)\exp\Big(-\frac{\T_i \N}{P_i }\|x_i\|^\alpha \Big) \nrmd  x_i.
%
\label{eqn:Pc_closed}
\end{align}
\end{lemma}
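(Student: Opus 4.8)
The plan is to observe that Lemma~\ref{Thm:5} follows from the proof of Theorem~\ref{Thm:1} by a single modification: under closed access the restricted tiers still transmit, hence still contribute to the interference field exactly as in the open-access case, but they are removed from the pool of BSs that may \emph{serve} the typical mobile. Concretely, since every $\T_i>1$, Lemma~\ref{Thm:0} (with $m=1$) guarantees that at most one BS in the \emph{entire} $K$-tier network has $\sinr$ above $1$, and \emph{a fortiori} above its own threshold; call this (almost surely unique, when it exists) ``serving candidate'' $x^\star$. Under closed access the typical user is in coverage precisely when $x^\star$ exists \emph{and} lies in a tier belonging to $\mathcal{B}$. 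Hence the events ``$x^\star\in\Phi_i$ with $\sinr(x^\star)>\T_i$'', $i\in\mathcal{B}$, are mutually exclusive (being a subcollection of the mutually exclusive events used in the open-access proof), so that
\[
\pc=\sum_{i\in\mathcal{B}} \E\Big[\sum_{x\in\Phi_i}\mathbf{1}\{\sinr(x)>\T_i\}\Big],
\]
where each inner sum takes only the values $0$ or $1$.

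Next I would evaluate each summand exactly as in the proof of Theorem~\ref{Thm:1}. The Campbell--Mecke formula turns $\E[\sum_{x\in\Phi_i}\mathbf{1}\{\sinr(x)>\T_i\}]$ into $\lambda_i\int_{\R^2}\P(\sinr(x)>\T_i)\,\nrmd x$, where by Slivnyak's theorem the probability is computed for a BS placed at $x$ against the \emph{unchanged} aggregate interference $I=\sum_{j=1}^K\sum_{y\in\Phi_j}P_j h_y\|y\|^{-\alpha}$ from all $K$ tiers. Since $h_x\sim\exp(1)$, we get $\P(\sinr(x)>\T_i)=\E[\exp(-\T_i\|x\|^\alpha(I+\sigma^2)/P_i)]$, which factors as the noise term $\exp(-\T_i\sigma^2\|x\|^\alpha/P_i)$ times the Laplace transform $\mathcal{L}_I(\T_i\|x\|^\alpha/P_i)$. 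Independence of the tiers and the PGFL of each PPP give $\mathcal{L}_I(s)=\exp(-C(\alpha)s^{2/\alpha}\sum_{m=1}^K\lambda_m P_m^{2/\alpha})$ after the standard polar-coordinate computation $\int_{\R^2}\frac{sP_m\|y\|^{-\alpha}}{1+sP_m\|y\|^{-\alpha}}\,\nrmd y=C(\alpha)(sP_m)^{2/\alpha}$ (using $\int_0^\infty \frac{v^{2/\alpha-1}}{1+v}\nrmd v=\pi\csc(2\pi/\alpha)$). Substituting $s=\T_i\|x\|^\alpha/P_i$, so $s^{2/\alpha}=(\T_i/P_i)^{2/\alpha}\|x\|^2$, and collecting factors yields precisely the integrand of \eqref{eqn:Pc_closed}; summing over $i\in\mathcal{B}$ completes the proof.

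There is no genuine analytic obstacle here; the one point that deserves explicit emphasis — and which I would call out — is that closed access alters the \emph{connection} rule but not the \emph{interference}, so each per-tier Laplace-transform calculation is literally identical to the open-access case and the sole effect is to restrict the outer index set from $\{1,\dots,K\}$ to $\mathcal{B}$. As sanity checks one notes the endpoints: $\mathcal{B}=\{1,\dots,K\}$ recovers Theorem~\ref{Thm:1}, while any proper subset gives a pointwise smaller integrand sum, confirming the stated intuition that restricting access can only decrease the coverage probability.
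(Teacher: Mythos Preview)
Your proposal is correct and follows essentially the same approach as the paper: use Lemma~\ref{Thm:0} (with $\T_i>1$) to replace the union over $i\in\mathcal{B}$ by a sum of mutually exclusive indicators, then evaluate each summand exactly as in the proof of Theorem~\ref{Thm:1}, noting that the interference field (and hence the Laplace transform) is unchanged by closed access. The paper's proof is terser---it simply writes the first two lines and then says ``following the same steps as the proof of Theorem~\ref{Thm:1}''---whereas you spell out the Campbell--Mecke and PGFL computations and add the sanity checks, but the argument is the same.
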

\begin{proof}
The coverage probability is
 \begin{align}
\pc(\{\lambda_i\}, \{\T_i\}, \{P_i\}) &= \P \left(\bigcup_{i\in \mathcal{B}, x_i \in \Phi_i} \sinr(x_i) > \T_i \right) \nonumber\\
& \stackrel{(a)}{=}  \sum_{i\in \mathcal{B} }\E \sum_{x_i \in \Phi_i}  \left[\nb1 \left( \sinr(x_i) > \T_i  \right)  \right], \nonumber
\end{align}
where $(a)$ again follows from Lemma~\ref{Thm:0} under the assumption that $\T_i>1$. Following the same steps as the proof of Theorem \ref{Thm:1}, we arrive at the final result.
\end{proof}

The following corollary specializes from Lemma \ref{Thm:5} to interference-limited HCNs.
\begin{cor}
When $\N=0$,
\[\pc(\{\lambda_i\}, \{\T_i\}, \{P_i\})=\frac{\pi}{C(\alpha)}  \frac{ \sum_{i\in \mathcal{B}}\lambda_i P_i^{2/\alpha} \T_i^{-2/\alpha}}{   \sum_{i=1}^K \lambda_i P_i^{2/\alpha}   }.\]
\end{cor}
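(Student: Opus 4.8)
The plan is to specialize the general closed-access formula of Lemma~\ref{Thm:5} by setting $\N=0$ and then evaluating the resulting integral in closed form. Starting from \eqref{eqn:Pc_closed}, the noise power $\N$ appears only in the factor $\exp(-\T_i\N\|x_i\|^\alpha/P_i)$; putting $\N=0$ makes this factor identically $1$, so the coverage probability collapses to
\[
\pc = \sum_{i\in\mathcal{B}} \lambda_i \int_{\R^2} \exp\Big(-C(\alpha)\big(\tfrac{\T_i}{P_i}\big)^{2/\alpha}\|x_i\|^2 \sum_{m=1}^K \lambda_m P_m^{2/\alpha}\Big)\,\nrmd x_i.
\]

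Next I would evaluate each planar integral by switching to polar coordinates. Abbreviating $a_i := C(\alpha)(\T_i/P_i)^{2/\alpha}\sum_{m=1}^K \lambda_m P_m^{2/\alpha}$, the integrand depends on $x_i$ only through $r=\|x_i\|$, so $\int_{\R^2} e^{-a_i r^2}\,\nrmd x_i = 2\pi\int_0^\infty r e^{-a_i r^2}\,\nrmd r = \pi/a_i$. Hence the $i$-th term equals $\pi\lambda_i/a_i = \tfrac{\pi}{C(\alpha)}\,\lambda_i P_i^{2/\alpha}\T_i^{-2/\alpha}\big/\sum_{m=1}^K \lambda_m P_m^{2/\alpha}$, using $(\T_i/P_i)^{-2/\alpha}=P_i^{2/\alpha}\T_i^{-2/\alpha}$. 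Pulling the common factor $\pi/C(\alpha)$ and the common denominator $\sum_{m=1}^K \lambda_m P_m^{2/\alpha}$ out of the sum over $i\in\mathcal{B}$ gives exactly the claimed expression.

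The argument is entirely routine once Lemma~\ref{Thm:5} is available; the only thing to notice is that the $\N=0$ substitution reduces the outer integrand to a radial Gaussian, whose integral over $\R^2$ is elementary. The single point worth emphasizing --- rather than a genuine obstacle --- is that the denominator $\sum_{m=1}^K \lambda_m P_m^{2/\alpha}$ runs over all $K$ tiers and not merely those in $\mathcal{B}$, since the barred tiers still contribute interference even though they cannot serve the typical user; only the numerator is restricted to $i\in\mathcal{B}$.
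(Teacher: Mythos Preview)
Your proposal is correct and follows exactly the route the paper intends: the corollary is simply Lemma~\ref{Thm:5} specialized to $\N=0$, after which the remaining planar integral is a radial Gaussian evaluated as $\pi/a_i$. Your added remark that the denominator ranges over all $K$ tiers (because closed tiers still interfere) is a helpful clarification but does not depart from the paper's approach.
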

If the threshold of each tier to be same (and equal to $\T$) and the transmit power of each tier to be same (and equal to $P$), the coverage probability is
$\frac{\pi}{C(\alpha) \T^{2/\alpha}  }  \frac{\sum_{i\in \mathcal{B}} \lambda_i}{  \sum_{i=1}^K \lambda_i  }$. So, if the thresholds and transmit powers of all the tiers are same, closed access has a lower coverage than open access by a factor of $\frac{\sum_{i\in \mathcal{B}} \lambda_i}{  \sum_{i=1}^K \lambda_i  }$.

\subsubsection{Average Load per Tier}
The main result for the average load per tier under closed access is given by Proposition \ref{Prop:1}. The proof directly follows from the proof of Proposition~\ref{Thm:3} with the understanding that the coverage event would now be defined by only the ``allowed'' tiers.
\begin{prop}
\label{Prop:1}
The average fraction of users in coverage served by $j^{th}$ tier (also the average load on $j^{th}$ tier) in closed access is
\begin{equation}
\bar{N}_j = \left\{ \begin{array}{ll}
\frac{\lambda_j\delta_j}{\pc(\{\lambda_i\}, \{\T_i\}, \{P_j\})} & j\in \mathcal{B},\\
0 & \text{otherwise}.
\end{array}
\right.
\end{equation}
where $\pc(\{\lambda_i\}, \{\T_i\}, \{P_j\})$ is the coverage probability under closed access given by Lemma~\ref{Thm:5} and 
\[\delta_j=\int_{\R^2}e^{-\left(\frac{\T_j}{P_j}\right)^{2/\alpha}C(\alpha)||x||^2 \sum_{m=1 }^K \lambda_m P_m^{2/\alpha} }e^{-\frac{\T_j \N}{P_j }||x||^\alpha } \nrmd  x.\]
The corresponding result for the interference-limited networks is
\begin{equation}
\bar{N}_j = \left\{ \begin{array}{ll}
\frac{\lambda_j P_j^{2/\alpha} \T_j^{-2/\alpha}}{\sum_{i\in \mathcal{B}} \lambda_i P_i^{2/\alpha} \T_i^{-2/\alpha}} & j \in \mathcal{B},\\
0 & \text{otherwise}.
\end{array}
\right.
\end{equation}
\end{prop}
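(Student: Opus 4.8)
The plan is to follow the definition of average load directly and reduce it to a ratio of two quantities already in hand, mirroring the proof of Proposition~\ref{Thm:3}. By definition $\bar N_j$ is the probability that a typical mobile, conditioned on being in coverage, is served by the $j$-th tier; writing this as a conditional probability gives $\bar N_j = \P(\text{typical user in coverage and served by tier }j)\,/\,\pc$, where the denominator is the closed-access coverage probability of Lemma~\ref{Thm:5}. The first, trivial, observation is that if $j\notin\mathcal{B}$ then tier $j$ is forbidden and acts purely as interference, so the numerator is $0$ and $\bar N_j=0$; this disposes of the second case at once.

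For $j\in\mathcal{B}$ I would invoke Lemma~\ref{Thm:0}: since $\T_i>1$ for all $i$, at most one BS in the whole network exceeds its $\sinr$ threshold, so the events $\{\sinr(x_i)>\T_i\}$ over $x_i\in\Phi_i$, $i\in\mathcal{B}$, are mutually exclusive. Consequently ``the typical user is in coverage and served by tier $j$'' is exactly the event that some $x_j\in\Phi_j$ has $\sinr(x_j)>\T_j$, whose probability equals $\E\sum_{x_j\in\Phi_j}\nb1(\sinr(x_j)>\T_j)$. Applying the Campbell-Mecke theorem to the PPP $\Phi_j$ turns this into $\lambda_j\int_{\R^2}\P(\sinr(x_j)>\T_j)\,\nrmd x_j$, where by Slivnyak's theorem the integrand is the probability computed for a BS placed at $x_j$ against the unchanged interference field. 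This inner probability is evaluated exactly as a single summand in the proof of Theorem~\ref{Thm:1} --- Laplace transform of the cumulative interference via the PGFL of a PPP together with the exponential law of the fading power --- and yields precisely $\lambda_j\,\delta_j$, i.e. the $i=j$ term of the integral in Lemma~\ref{Thm:5}. Dividing by the $\pc$ of Lemma~\ref{Thm:5} gives the stated formula.

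For the interference-limited case I would set $\N=0$ in $\delta_j$, reducing it to the Gaussian integral $\int_{\R^2}e^{-a\|x\|^2}\,\nrmd x=\pi/a$ with $a=C(\alpha)(\T_j/P_j)^{2/\alpha}\sum_{m}\lambda_m P_m^{2/\alpha}$, so $\lambda_j\delta_j=\pi\lambda_j P_j^{2/\alpha}\T_j^{-2/\alpha}\big/\big(C(\alpha)\sum_m\lambda_m P_m^{2/\alpha}\big)$. Dividing by the interference-limited closed-access $\pc$ from the preceding corollary, the common factor $\pi\big/\big(C(\alpha)\sum_m\lambda_m P_m^{2/\alpha}\big)$ cancels, leaving $\lambda_j P_j^{2/\alpha}\T_j^{-2/\alpha}\big/\sum_{i\in\mathcal{B}}\lambda_i P_i^{2/\alpha}\T_i^{-2/\alpha}$, as claimed.

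There is no genuine analytic obstacle here: every nontrivial computation is inherited from Theorem~\ref{Thm:1} and Lemma~\ref{Thm:5}. The only point needing care is the probabilistic bookkeeping --- identifying ``in coverage and served by tier $j$'' with the single-tier event $\{\exists\,x_j\in\Phi_j:\sinr(x_j)>\T_j\}$ via the mutual exclusivity of Lemma~\ref{Thm:0}, and checking that the closed-access restriction merely deletes the forbidden tiers from the union of coverage events (leaving each allowed-tier term intact), so that $\delta_j$ really is the $j$-th term of the Lemma~\ref{Thm:5} integral rather than a rescaled version of it.
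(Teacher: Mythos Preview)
Your proposal is correct and follows essentially the same route as the paper: the paper merely states that the proof mirrors that of Proposition~\ref{Thm:3} with the coverage event restricted to the allowed tiers $\mathcal{B}$, and the proof of Proposition~\ref{Thm:3} in the Appendix is exactly your Bayes-rule decomposition together with Lemma~\ref{Thm:0}, Campbell--Mecke/Slivnyak, and the Laplace-transform computation from Theorem~\ref{Thm:1}. The only minor presentational difference is that the paper begins from the spatial-average definition of $\bar N_j$ and invokes ergodicity of the PPP to reduce to the Palm (conditional) probability at the origin, whereas you start directly from that conditional-probability interpretation; since Proposition~\ref{Thm:3} already carries out that reduction, your starting point is justified.
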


\section{Average Rate}
In this section, we derive the average rate  $\bar{R}$ achievable by a random mobile user when it is in coverage both for the open and closed access strategies. It is worth noting that since the rate is computed conditioned on the mobile being in coverage, it is not the same as the classic ergodic rate $\E[R]$. The motivation behind considering this metric is that given the coverage/outage information, the service providers are interested in knowing the average rate they can provide to the users that are in coverage.
\subsection{Open Access}
The main result for the average rate in open access is given in Theorem \ref{Thm:4}. In this section, for notational simplicity, we restrict our attention to the case of $\N=0$. However, the results can be extended to the general case with noise in a straightforward manner.
\begin{theorem}
The average rate achievable by a randomly chosen mobile in open access when it is in coverage is
\begin{equation}
\bar{R} = \log\left( 1+ \T_{min}\right) + \frac{\sum_{i=1}^K \lambda_i P_i^{2/\alpha} \calA(\alpha, \T_i, \T_{min})  }{\sum_{i=1}^K \lambda_i P_i^{2/\alpha} \T_i^{-2/\alpha}},
\end{equation}
where
\[\calA(\alpha, \T_i, \T_{min}) = \int_{\T_{min}}^{\infty} \frac{\max(\T_i, x)^{-2/\alpha}}{1+x} \nrmd  x,\]
and $\T_{min} = \min\{\T_1, \T_2, \ldots, \T_K\}$.
\label{Thm:4}
\end{theorem}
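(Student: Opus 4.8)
The plan is to evaluate $\bar R=\E[\log(1+\sinr)\mid\text{coverage}]$ by combining the tail-integral representation of the logarithm with the Laplace-transform computation already carried out for Theorem~\ref{Thm:1}. Since $\T_i>1$ for every $i$, Lemma~\ref{Thm:0} tells us that on the coverage event exactly one BS, say $x^*\in\Phi_i$, exceeds its threshold, and the mobile is served by it at rate $\log(1+\sinr(x^*))$. Hence
\[
\log(1+\sinr(x^*))\,\nb1(\text{coverage})=\sum_{i=1}^K\sum_{x_i\in\Phi_i}\log(1+\sinr(x_i))\,\nb1(\sinr(x_i)>\T_i),
\]
and because the double sum has at most one nonzero term, $\bar R=\pc^{-1}\,\E\big[\sum_{i=1}^K\sum_{x_i\in\Phi_i}\log(1+\sinr(x_i))\,\nb1(\sinr(x_i)>\T_i)\big]$, where $\pc$ is the open access coverage probability of Corollary~\ref{Cor:1}.

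The key identity I would then invoke is $\log(1+y)\,\nb1(y>\T)=\int_0^\infty\nb1\big(y>\max(e^t-1,\T)\big)\,\nrmd t$, valid for all $y>0$ and $\T>0$; applying it with $y=\sinr(x_i)$, $\T=\T_i$ and using Tonelli to move the $t$-integral outside the expectation and the sum over $\Phi_i$ reduces the problem to evaluating, for each tier $i$ and each $t\ge0$, the ``partial coverage'' quantity $\E\big[\sum_{x_i\in\Phi_i}\nb1(\sinr(x_i)>\tau)\big]$ at threshold $\tau=\max(e^t-1,\T_i)\ge\T_i>1$. But that quantity is exactly the object computed inside the proof of Theorem~\ref{Thm:1} --- apply Campbell--Mecke, then the PGFL of the PPP together with the Laplace transform of the exponential fading; setting $\N=0$ removes the second exponential in \eqref{eqn:Pc} and collapses the plane integral to $\int_{\R^2}e^{-a\|x\|^2}\,\nrmd x=\pi/a$, yielding
\[
\E\Big[\sum_{x_i\in\Phi_i}\nb1(\sinr(x_i)>\tau)\Big]=\frac{\pi\,\lambda_i P_i^{2/\alpha}\,\tau^{-2/\alpha}}{C(\alpha)\sum_{m=1}^K\lambda_m P_m^{2/\alpha}}.
\]

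Next I would change variables via $x=e^t-1$, so that $\nrmd t=\nrmd x/(1+x)$ and $\int_0^\infty\max(e^t-1,\T_i)^{-2/\alpha}\,\nrmd t=\int_0^\infty\frac{\max(x,\T_i)^{-2/\alpha}}{1+x}\,\nrmd x$, and split the range at $\T_{min}=\min_j\T_j$. On $[0,\T_{min}]$ one has $\T_{min}\le\T_i$, hence $\max(x,\T_i)=\T_i$ and this piece equals $\T_i^{-2/\alpha}\log(1+\T_{min})$; the piece on $[\T_{min},\infty)$ is precisely $\calA(\alpha,\T_i,\T_{min})$. Putting these together, summing over $i$, dividing by $\pc=\frac{\pi}{C(\alpha)}\frac{\sum_i\lambda_iP_i^{2/\alpha}\T_i^{-2/\alpha}}{\sum_i\lambda_iP_i^{2/\alpha}}$ from Corollary~\ref{Cor:1}, and cancelling the common factors $\pi/C(\alpha)$ and $\sum_m\lambda_mP_m^{2/\alpha}$ produces $\bar R=\log(1+\T_{min})+\big(\sum_i\lambda_iP_i^{2/\alpha}\calA(\alpha,\T_i,\T_{min})\big)/\big(\sum_i\lambda_iP_i^{2/\alpha}\T_i^{-2/\alpha}\big)$, which is the claimed formula. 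The one step that needs genuine care is the first: correctly encoding ``average rate conditioned on coverage'' as $\pc^{-1}\E[\cdots]$ and invoking Lemma~\ref{Thm:0} so that the double sum collapses to the single served BS; once past that, what remains is a Tonelli interchange, a verbatim reuse of the Theorem~\ref{Thm:1} computation, and an elementary integral split.
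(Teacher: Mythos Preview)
Your proposal is correct and follows essentially the same route as the paper: both reduce $\bar R$ to a tail integral of $\log(1+\cdot)$, recognize that the resulting tail probability at level $T$ is exactly the coverage expression of Theorem~\ref{Thm:1}/Corollary~\ref{Cor:1} evaluated at thresholds $\max(T,\T_i)$, and then split the integral at $\T_{min}$. The only cosmetic difference is that the paper first packages the intermediate object as the conditional CCDF $\P(\max\sir>T\mid\mathbf{C}(\{\T_i\}))$ before integrating, whereas you go tier-by-tier via the double sum and Campbell--Mecke; the computations and final algebra are identical.
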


\begin{IEEEproof}
See Appendix.
\end{IEEEproof}
Thus we observe that the average rate expression involves only a single integral which can be easily evaluated numerically.

\begin{cor}
\label{Cor:4}
Using the same threshold $\T$ for all tiers, the average rate achievable by a randomly chosen mobile that is in coverage in open access is:
\begin{equation}
\bar{R} = \log(1+\T) + \T^{2/\alpha} \calA(\alpha, \T, \T).
\end{equation}
\end{cor}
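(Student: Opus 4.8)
The plan is to derive Corollary~\ref{Cor:4} as a direct specialization of Theorem~\ref{Thm:4}, obtained by setting $\T_i=\T$ for every $i\in\{1,\dots,K\}$; no new probabilistic argument is needed. First I would note that when all the per-tier thresholds coincide, $\T_{min}=\min\{\T_1,\dots,\T_K\}=\T$, so the standalone term $\log(1+\T_{min})$ in Theorem~\ref{Thm:4} immediately becomes $\log(1+\T)$.

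Next I would simplify the fraction. Since $\T_i=\T_{min}=\T$ for all $i$, the coefficient $\calA(\alpha,\T_i,\T_{min})=\calA(\alpha,\T,\T)$ is independent of the summation index, so it factors out of the numerator $\sum_{i=1}^K \lambda_i P_i^{2/\alpha}\calA(\alpha,\T_i,\T_{min})$. Likewise $\T_i^{-2/\alpha}=\T^{-2/\alpha}$ factors out of the denominator $\sum_{i=1}^K \lambda_i P_i^{2/\alpha}\T_i^{-2/\alpha}$. After pulling these constants out, the remaining common factor $\sum_{i=1}^K \lambda_i P_i^{2/\alpha}$ appears in both numerator and denominator and cancels, so the fraction collapses to $\calA(\alpha,\T,\T)\big/\T^{-2/\alpha}=\T^{2/\alpha}\,\calA(\alpha,\T,\T)$. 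Adding this to the leading $\log(1+\T)$ yields $\bar{R}=\log(1+\T)+\T^{2/\alpha}\calA(\alpha,\T,\T)$, as claimed.

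There is essentially no obstacle here — the result is pure algebraic bookkeeping on the formula of Theorem~\ref{Thm:4}. The only point one might pause on is that $\calA(\alpha,\T,\T)=\int_{\T}^{\infty}\frac{\max(\T,x)^{-2/\alpha}}{1+x}\,\nrmd x=\int_{\T}^{\infty}\frac{x^{-2/\alpha}}{1+x}\,\nrmd x$ is finite (the integrand decays like $x^{-1-2/\alpha}$ at infinity and $\alpha>2$), but this convergence is already guaranteed under the hypotheses of Theorem~\ref{Thm:4}, so it carries over for free.
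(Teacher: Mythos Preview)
Your proposal is correct and matches the paper's approach exactly: the corollary is stated immediately after Theorem~\ref{Thm:4} without a separate proof, the understanding being that one simply substitutes $\T_i=\T$ for all $i$ and performs precisely the cancellation you describe. There is nothing to add.
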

The above result shows that the average rate is independent of the density of BSs of each tier when the $\sir$ thresholds are same for all the tiers. This is expected because the distribution of $\max \sir$ does not depend upon the density of BSs in this case (follows from Theorem \ref{Thm:1}).

\subsection{Closed Access}
The average rate $\bar{R}_c$ achievable by a randomly chosen mobile under closed access (assuming it is under coverage) can be expressed as:
\begin{equation}
\E \left[ \log \left( 1+ \max_{x \in \bigcup_{i \in \mathcal{B}} \Phi_i } (\sir(x))\right) \Big| \bigcup_{i \in \mathcal{B} } \bigcup_{x \in \Phi_i} \left( \sir(x) > \T_i  \right)   \right].
\end{equation}
Following the same steps as in proof of Theorem \ref{Thm:4}, we arrive at the following Proposition.

\begin{prop}
Assuming a mobile user is allowed to connect to only a subset $\mathcal{B}$ of the $K$ tiers, the average rate (assuming mobile is under coverage) can be expressed as:
\begin{equation}
\bar{R}_c = \log\left( 1+ \T_{min}\right) + \frac{\sum_{i\in \mathcal{B}} \lambda_i P_i^{2/\alpha} \calA(\alpha, \T_i, \T_{min})  }{\sum_{i\in \mathcal{B}} \lambda_i P_i^{2/\alpha} \T_i^{-2/\alpha}},
\end{equation}
where $\T_{min} = \min\limits_{i \in \mathcal{B}}\{\T_i\}$.
\end{prop}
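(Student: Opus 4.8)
The plan is to run the argument of Theorem~\ref{Thm:4} essentially unchanged, merely replacing the tier index set $\{1,\dots,K\}$ by the admissible set $\mathcal{B}$ everywhere a \emph{serving} tier is summed over (the interference sum over all $K$ tiers is untouched, since the disallowed tiers still contribute interference). Write $S_c=\max_{x\in\bigcup_{i\in\mathcal{B}}\Phi_i}\sir(x)$ and let $\mathcal{C}=\bigcup_{i\in\mathcal{B}}\bigcup_{x\in\Phi_i}\{\sir(x)>\T_i\}$ be the closed-access coverage event, so that $\bar{R}_c=\E[\log(1+S_c)\mathbf{1}(\mathcal{C})]/\pc$, with $\pc$ the closed-access coverage probability from Lemma~\ref{Thm:5} at $\N=0$, i.e.\ $\pc=\frac{\pi}{C(\alpha)}\frac{\sum_{i\in\mathcal{B}}\lambda_i P_i^{2/\alpha}\T_i^{-2/\alpha}}{\sum_{m=1}^{K}\lambda_m P_m^{2/\alpha}}$.

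First I would write the numerator as a tail integral, $\E[\log(1+S_c)\mathbf{1}(\mathcal{C})]=\int_0^\infty\P(S_c>e^{t}-1,\ \mathcal{C})\,\nrmd t$, and substitute $x=e^{t}-1$ to obtain $\int_0^\infty\frac{\P(S_c>x,\ \mathcal{C})}{1+x}\,\nrmd x$. Since $\T_i>1$ for every $i$, Lemma~\ref{Thm:0} forces the event $\mathcal{C}$ to have a unique serving BS whose $\sir$ equals $S_c$ and exceeds $\T_{min}=\min_{i\in\mathcal{B}}\T_i>1$; hence $\{S_c>x\}\supseteq\mathcal{C}$ for every $x\le\T_{min}$, so $\P(S_c>x,\mathcal{C})=\pc$ on $[0,\T_{min}]$ and this part of the integral contributes $\pc\log(1+\T_{min})$. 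For $x>\T_{min}$, the joint event $\{S_c>x\}\cap\mathcal{C}$ is exactly the (disjoint, by Lemma~\ref{Thm:0}) union over $i\in\mathcal{B}$ of the events ``some BS of $\Phi_i$ has $\sir>\max(\T_i,x)$'', and the Campbell--Mecke / PGFL computation already performed for Theorem~\ref{Thm:1} and Corollary~\ref{Cor:1}, now with threshold $\max(\T_i,x)$ in place of $\T_i$ and $\N=0$, evaluates each such probability to $\frac{\pi}{C(\alpha)}\frac{\lambda_i P_i^{2/\alpha}\max(\T_i,x)^{-2/\alpha}}{\sum_{m=1}^{K}\lambda_m P_m^{2/\alpha}}$.

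Summing over $i\in\mathcal{B}$, substituting back, interchanging the (finite) sum with the integral, and recognising $\int_{\T_{min}}^\infty\frac{\max(\T_i,x)^{-2/\alpha}}{1+x}\,\nrmd x=\calA(\alpha,\T_i,\T_{min})$ gives $\E[\log(1+S_c)\mathbf{1}(\mathcal{C})]=\pc\log(1+\T_{min})+\frac{\pi}{C(\alpha)}\frac{\sum_{i\in\mathcal{B}}\lambda_i P_i^{2/\alpha}\calA(\alpha,\T_i,\T_{min})}{\sum_{m=1}^{K}\lambda_m P_m^{2/\alpha}}$. Dividing by $\pc$ cancels the common factor $\pi/(C(\alpha)\sum_{m=1}^{K}\lambda_m P_m^{2/\alpha})$ between numerator and denominator and leaves $\bar{R}_c=\log(1+\T_{min})+\frac{\sum_{i\in\mathcal{B}}\lambda_i P_i^{2/\alpha}\calA(\alpha,\T_i,\T_{min})}{\sum_{i\in\mathcal{B}}\lambda_i P_i^{2/\alpha}\T_i^{-2/\alpha}}$, which is the claim.

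Every step above is a transcription of an already-established computation, so the only place requiring genuine care is the small-$x$ bookkeeping: one must invoke Lemma~\ref{Thm:0} together with $\T_{min}>1$ to be sure that, conditioned on $\mathcal{C}$, the $\sir$-maximising BS among the allowed tiers is really the serving BS and $S_c>\T_{min}$ almost surely, so that $\P(S_c>x,\mathcal{C})$ is genuinely flat and equal to $\pc$ on $[0,\T_{min}]$ and the ``$\max(\T_i,x)$'' truncation continuously matches the two regimes at $x=\T_{min}$. I expect this to be the main (albeit minor) obstacle; the rest is routine.
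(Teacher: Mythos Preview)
Your proposal is correct and follows essentially the same route as the paper, which simply states ``Following the same steps as in proof of Theorem~\ref{Thm:4}'' and leaves the details implicit. The only cosmetic difference is that you carry the joint probability $\P(S_c>x,\mathcal{C})$ through the tail integral and divide by $\pc$ at the end, whereas the paper's Theorem~\ref{Thm:4} argument forms the conditional CCDF $\P(X>y\mid\mathbf{C})$ first and then integrates; the two orderings are trivially equivalent and your bookkeeping for the $x\le\T_{min}$ regime via Lemma~\ref{Thm:0} is exactly what is needed.
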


\begin{cor}
\label{Cor:6}
Assuming the threshold of each tier is the same and equal to $\T$, the average rate achievable by a randomly chosen mobile in coverage under closed access is
\begin{equation}
\bar{R}_c = \log(1+\T) + \T^{2/\alpha} \calA(\alpha, \T, \T).
\end{equation}
\end{cor}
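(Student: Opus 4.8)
The plan is to obtain Corollary~\ref{Cor:6} as an immediate specialization of the preceding Proposition for the closed-access average rate $\bar{R}_c$. Setting $\T_i = \T$ for every $i \in \mathcal{B}$, the first step is to note that $\T_{min} = \min_{i \in \mathcal{B}}\{\T_i\} = \T$, so the leading term $\log(1+\T_{min})$ collapses to $\log(1+\T)$.

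The second step is to simplify the fractional term. With $\T_i = \T$ and $\T_{min} = \T$, the quantity $\calA(\alpha, \T_i, \T_{min})$ no longer depends on the tier index: it equals $\calA(\alpha, \T, \T) = \int_{\T}^{\infty} \max(\T, x)^{-2/\alpha}(1+x)^{-1}\,\nrmd x = \int_{\T}^{\infty} x^{-2/\alpha}(1+x)^{-1}\,\nrmd x$, since $x \ge \T$ throughout the range of integration. Pulling this common factor out of the numerator sum $\sum_{i\in\mathcal{B}} \lambda_i P_i^{2/\alpha}\calA(\alpha,\T_i,\T_{min})$ and pulling $\T^{-2/\alpha}$ out of the denominator sum $\sum_{i\in\mathcal{B}} \lambda_i P_i^{2/\alpha}\T_i^{-2/\alpha}$, the residual factor $\sum_{i\in\mathcal{B}} \lambda_i P_i^{2/\alpha}$ — which is strictly positive whenever $\mathcal{B}$ is nonempty and the densities and powers are positive — cancels, so the fractional term reduces to $\calA(\alpha,\T,\T)/\T^{-2/\alpha} = \T^{2/\alpha}\calA(\alpha,\T,\T)$. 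Adding the two pieces yields $\bar{R}_c = \log(1+\T) + \T^{2/\alpha}\calA(\alpha,\T,\T)$, which is the claim.

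There is essentially no obstacle in this derivation; the only points worth a word of care are the nonvanishing of the common factor that licenses the cancellation, and the elementary fact that $\max(\T, x) = x$ on $[\T, \infty)$ so that $\calA(\alpha,\T,\T)$ takes the simplified integral form above. As an alternative, one could reprove the statement directly by repeating the argument behind Theorem~\ref{Thm:4} with the union over $i \in \{1,\dots,K\}$ replaced by the union over $i \in \mathcal{B}$ and all thresholds equal to $\T$; in that route the density-and-power weights again enter only through the ratio above and cancel for the same reason, recovering the density-independence already observed in Corollary~\ref{Cor:4} for the open-access case.
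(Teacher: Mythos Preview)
Your proposal is correct and is exactly the intended argument: the paper states Corollary~\ref{Cor:6} without proof as an immediate specialization of the closed-access Proposition, and your derivation carries out precisely that specialization (set $\T_i=\T$ for all $i\in\mathcal{B}$, so $\T_{min}=\T$, factor $\calA(\alpha,\T,\T)$ and $\T^{-2/\alpha}$ out of the sums, and cancel the common $\sum_{i\in\mathcal{B}}\lambda_i P_i^{2/\alpha}$). Nothing further is needed.
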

From Corollaries \ref{Cor:4} and \ref{Cor:6}, we observe that the average rate ($\bar{R}$) of the mobile while it is in coverage is not affected by access control when the thresholds are the same for all tiers. However, since the coverage probability is lower in case of closed access, it would naturally lead to a lower ergodic rate as compared to the open access networks. Interested readers can refer to~\cite{DhiGanC2011a} for the derivation of ergodic rate in this framework.

\section{Numerical Results}
\label{sec:numerical}
\begin{figure}[ht!]
\centering
\includegraphics[width=0.9\columnwidth]{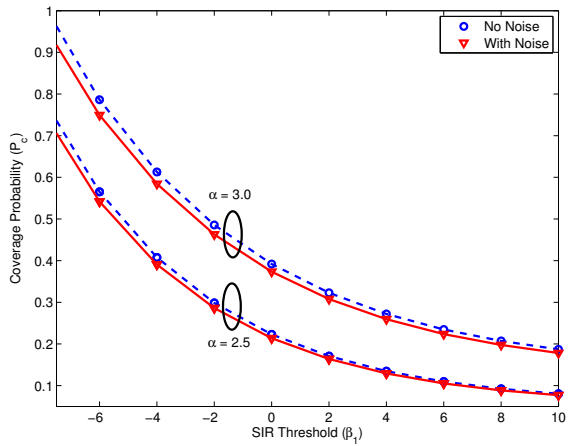}
\caption{Coverage probability in a two-tier HCN with and without thermal noise ($K=2$, $P_1 = 25 P_2$, $\lambda_2 = 5\lambda_1$, $\T_2 = 1$ dB, $\snr_{edge} = 0$ dB).}
\label{fig:PcK2NoiseComp}
\end{figure}

\begin{figure}[ht!]
\centering
\includegraphics[width=0.9\columnwidth]{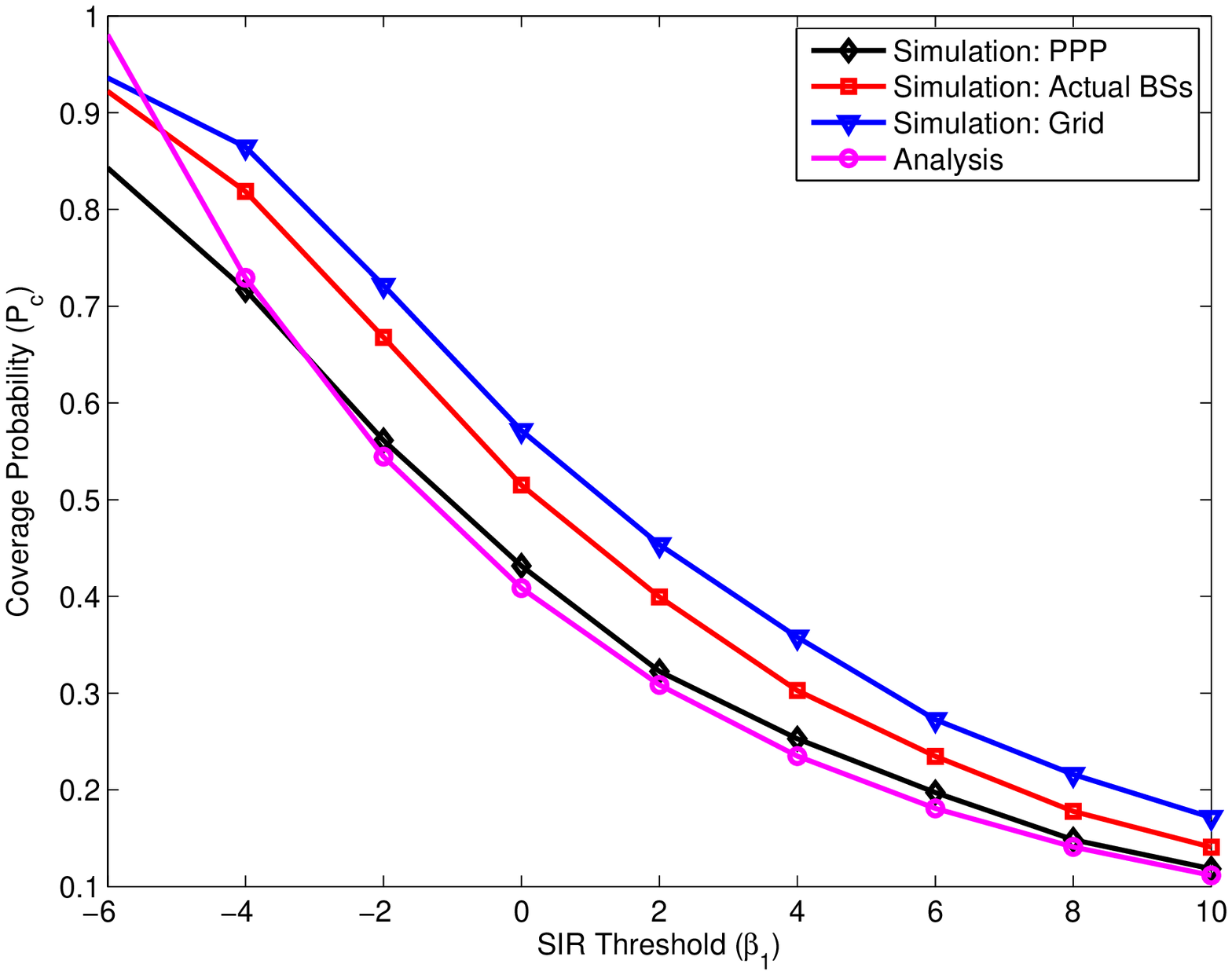}
\caption{Coverage probability in a two-tier HCN ($K=2$, $\alpha=3$, $P_1 = 100 P_2$, $\lambda_2 = 2\lambda_1$, $\T_2 = 1$ dB, No noise).}
\label{fig:PcK2NoNoiseA3}
\end{figure}

\begin{figure}[ht!]
\centering
\includegraphics[width=0.9\columnwidth]{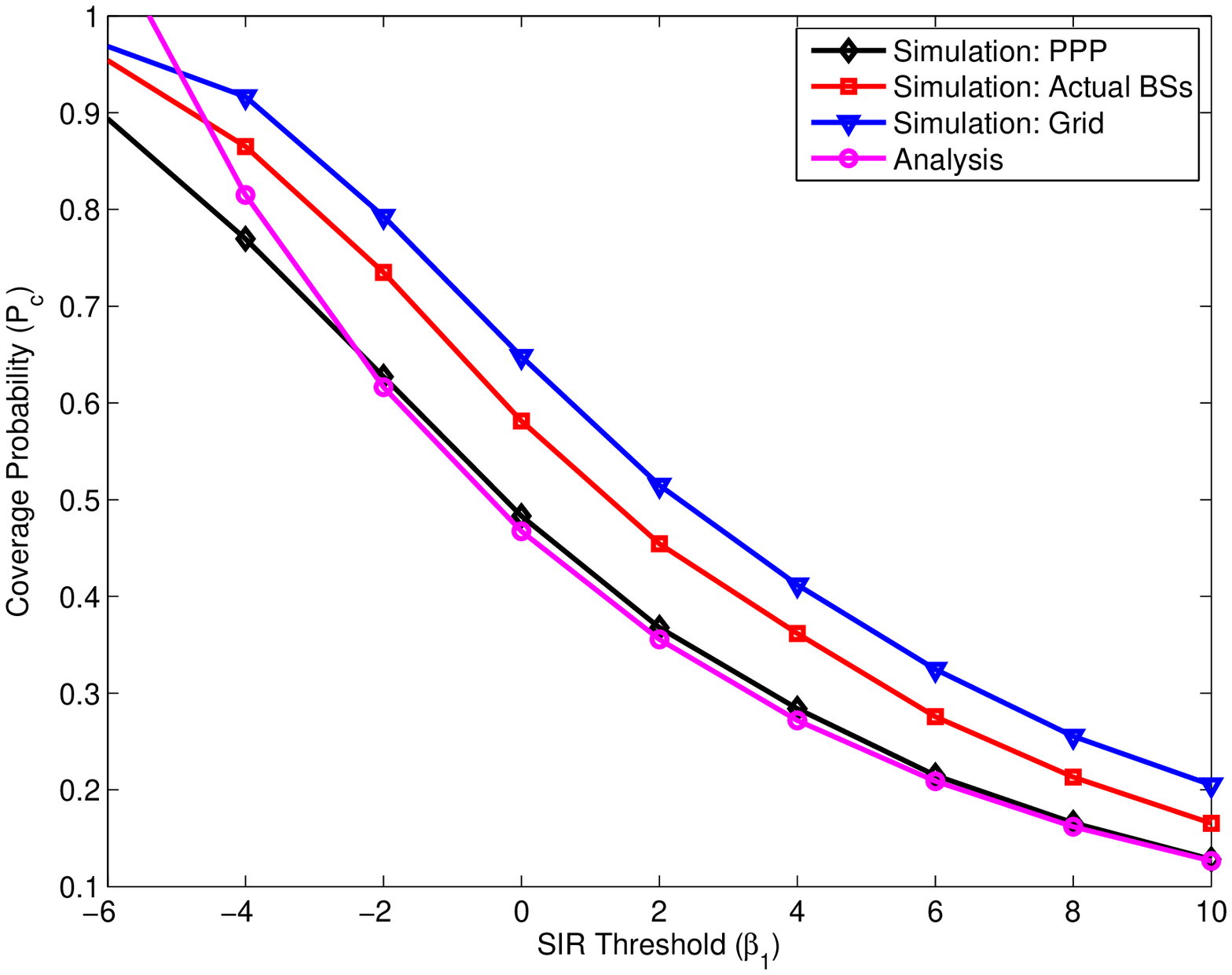}
\caption{Coverage probability in a two-tier HCN ($K=2$, $\alpha=3.2$, $P_1 = 1000 P_2$, $\lambda_2 = 4\lambda_1$, $\T_2 = 1$ dB, No noise).}
\label{fig:PcK2NoNoiseA3_2}
\end{figure}
Most of the analytical results presented in this paper are fairly self-explanatory and do not require a separate numerical interpretation. Therefore, to avoid repetition, we will present only non-obvious trends and validation of the model in this section.
\subsection{Effect of Thermal Noise}
We first study the effect of thermal noise on the coverage probability by considering a typical two-tier network consisting of macro-cells overlaid with pico-cells. To set the noise power, we use the following notion of cell-edge users in this example. Defining the distance of the the nearest macro BS to the typical mobile user to be $d$ and the underlying random variable to be $D$, the mobile user is said to be on the cell edge if $\P(D\le d) \ge P_{edge}$, where $P_{edge}$ is set to $0.9$ for this illustration. For PPP($\lambda$), $\P(D\le d)= 1 - \exp(\lambda \pi d^2)$, giving $d \ge \sqrt \frac{-\ln(1-P_{edge})}{\pi \lambda}$. For a desired edge-user $\snr$, say $\snr_{edge}$, $\sigma^2$ can be approximated as $\sigma^2 \approx  \frac{P_t d_{edge}^{-\alpha}}{\snr_{edge}}$, where $d_{edge}$ is the limiting value of $d$ evaluated above. Under this setup, we present the coverage probability for various values of $\alpha$ in Fig.~\ref{fig:PcK2NoiseComp}. By comparing these results with the no-noise case, we note that the typical HCNs are interference limited and hence thermal noise has a very limited effect on coverage probability. Therefore, we will ignore noise in the rest of this section.

\subsection{Validity of PPP Model and $\T>1$ Assumption}
While a random PPP model is probably the best that can be hoped for in modeling ``unplanned'' tiers, such as femtocells, its  accuracy in modeling ``planned'' BS locations, such as those of macro-cells, is open to question. Therefore we verify the PPP assumption for macro-cells from a coverage probability perspective by considering a two-tier network in three different scenarios: 1) the macro-cell BSs are distributed according to PPP (our model), 2)  the macro-cell BSs correspond to an actual $4$G deployment, and 3) macro-cell BSs are distributed according to hexagonal grid model. The second tier is modeled as an independent PPP in all three cases. As shown in Figs.~\ref{fig:PcK2NoNoiseA3} and~\ref{fig:PcK2NoNoiseA3_2}, the actual coverage probability lies between the coverage probabilities of the PPP and grid model. This is because the likelihood of a dominant interferer is highest for the PPP and lowest for the grid model. This comparison shows that the PPP assumption is nearly as accurate as the grid model in the case of macro-cells, with the PPP providing a lower bound and grid model providing an upper bound to the actual coverage probability.

We now focus on the $\T>1$ assumption by comparing the theoretical and simulated results for coverage probability in Figs.~\ref{fig:PcK2NoNoiseA3} and \ref{fig:PcK2NoNoiseA3_2}. As expected, the simulated and analytical results match reasonably well for $\T_i>1$ but interestingly, the theoretical results also provide a tight upper bound to the exact solution even until about $\T_1 = -4$ dB ($\approx .4$). Therefore, the analytical results also cover typical cell edge users. The same trend is observed in the case of average rate results presented in Fig.~\ref{fig:RateK2NoNoise}, which are also accurate down to about $-4$ dB target-$\sir$.
\begin{figure}[ht!]
\centering
\includegraphics[width=.85\columnwidth]{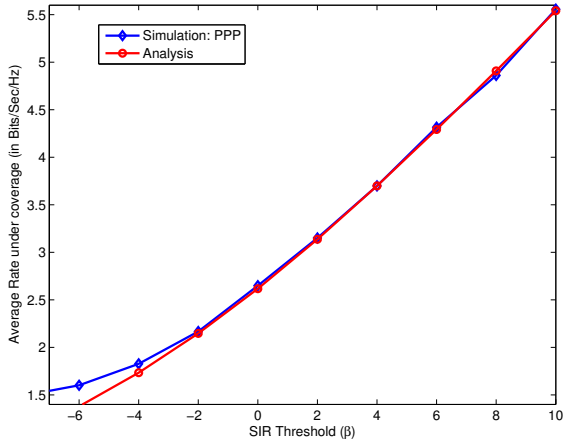}
\caption{Average rate while mobile is in coverage ($K=2$, $\alpha=3$, $P_1 = 1000 P_2$, $\lambda_2 = 2\lambda_1$, $\T_1 = \T_2 = \T$, no noise, open access).}
\label{fig:RateK2NoNoise}
\end{figure}

\section{Conclusions}
We have provided a new tractable model for $K$-tier downlink HCNs and an associated analysis procedure that gives simple mathematical expressions for the most important performance metrics. The possible extensions of this work are numerous, and could include physical layer technologies like multiple antennas, spread spectrum, power control, interference cancelation or interference alignment. For the MAC layer, it would be useful to include scheduling, resource allocation and/or various forms of frequency reuse (such as fractional frequency reuse) in this setup. At the network level, further BS cooperation techniques could use the framework.

The random spatial model used in this paper could likely be further improved by incorporating a point process that models repulsion or minimum separation distance between BSs, such as determinantal and Mat\'ern processes~\cite{StoKen96, BacBlaNOW}, respectively.  Often picocells or enterprise (operator deployed) femtocells are clustered in high-demand areas, so a Poisson cluster process \cite{GanHae09a} might be useful to model such scenarios.  It would also be very helpful to consider further actual deployments from the one considered in this paper (flat, large, low density urban), and see which of these general point processes best model different deployment scenarios such as dense urban (New York, Tokyo) vs. sprawling (Los Angeles, Sydney), flat relatively uniform cities (Paris, London) vs. mountainous/coastal cities (Rio de Janiero, Hong Kong), and so on. Finally, the user distribution was implicitly assumed in this paper to be homogeneous, with all BSs actively transmitting at all times.  An important extension is to consider non-homogeneous user distributions and realistic traffic models to better understand optimal cell association policies in heterogeneous cellular networks.  An initial characterization of biased cell association policies can be found in \cite{JoSanXia11}.

\appendix
\subsection{Proof of Lemma \ref{Thm:0}}
Since $\sinr < \sir$ for all the BSs, defining $b_i = \frac{a_i}{\sum_{j\ne i}a_j}$ as the $\sir$ corresponding to the $i^{th}$ BS, it suffices to show that at most $m$ $b_i$'s can be greater than $1/m$ for any positive integer $m$. This is shown below.
\begin{align}
b_i &= \frac{a_i}{\sum_{j\ne i}a_j} = \frac{a_i}{\sum_{j}a_j - a_i} \nonumber \\
& \Rightarrow  \frac{b_i}{1+b_i} = \frac{a_i}{\sum_{j}a_j} \nonumber \\
& \Rightarrow  \sum_{i=1}^n \frac{1}{1/b_i+1} = 1.
\label{eqn:TProof}
\end{align}
We first prove the result for $m=1$ (by contradiction) and then show that it can be trivially extended to the case of general $m$.
We first observe that \eqref{eqn:TProof} is satisfied if only one of the $b_i$'s is greater than 1. Now assume that two $b_i$'s are greater than one and without loss of generality, assume that they are $b_1$ and $b_2$. This implies $1/b_1$ and $1/b_2 \in (0, 1)$. Therefore, $\frac{1}{1/b_i+1}$ and $\frac{1}{1/b_i+1} \in (1/2, 1)$. Thus,
\begin{align}
\sum_{i=1}^n \frac{1}{1/b_i+1} &= \sum_{i=1}^2 \frac{1}{1/b_i+1} + \sum_{i=3}^n \frac{1}{1/b_i+1}, \nonumber \\
&> 1 + \sum_{i=3}^n \frac{1}{1/b_i+1},
\label{eqn:m1case}
\end{align}
which is in contradiction with \eqref{eqn:TProof}. Since \eqref{eqn:TProof} does not even hold for two $b_i$'s greater than one, it proves that the only one of the $b_i$'s can be greater than one. Similarly for the case of general $m$, it is easy to observe that \eqref{eqn:TProof} is trivially satisfied if at most $m$ of the $b_i$'s are greater than $1/m$. Now assume that $m+1$ $b_i$'s are greater than $1/m$ and without loss of generality, assume that they are $b_1$, $b_2$, \ldots, $b_{m+1}$. Proceeding as in \eqref{eqn:m1case},
\begin{equation}
\sum_{i=1}^n \frac{1}{1/b_i+1} > 1 + \sum_{i=m+2}^n \frac{1}{1/b_i+1},
\end{equation}
which is in contradiction to \eqref{eqn:TProof}. Therefore, at most $m$ $b_i$'s can be greater than $1/m$.
\hfill \IEEEQEDclosed   

\subsection{Proof of Theorem \ref{Thm:1}}
For notational simplicity, denote the set $\{1, 2, \ldots K\}$ by $\calK$. The coverage probability in a $K$-tier network under \textit{maximum $\sinr$ connectivity model} can be derived as follows:
 \begin{align}
&\pc(\{\lambda_i\}, \{\T_i\}, \{P_i\})\nonumber \\
 &= \P \left(\bigcup_{i\in \calK, x_i \in \Phi_i} \sinr(x_i) > \T_i \right) \nonumber\\
&= \E\left[ \nb1 \left( \bigcup_{i\in \calK, x_i \in \Phi_i} \sinr(x_i) > \T_i  \right) \right] \nonumber \\
& \stackrel{(a)}{=}  \sum_{i=1}^{K}\E \sum_{x_i \in \Phi_i}  \left[\nb1 \left( \sinr(x_i) > \T_i  \right)  \right] \nonumber \\
&\stackrel{(b)}{=} \sum_{i=1}^{K} \lambda_i\int_{\R^2}\P\left(\frac{P_i h_{x_i} l(x_i)x_i)}{I_{x_i}+\N} > \T_i  \right) \nrmd  x_i \nonumber  \\
&\stackrel{(c)}{=}
 \sum_{i=1}^{K} \lambda_i \int_{\R^2} \ncalL_{I_{x_i}} \left(\frac{\T_i}{P_i l(x_i)}  \right) e^{\frac{-\T_i \N}{P_i l(x_i)x_i)}} \nrmd  x_i,
\label{eqn:Pc_der}
\end{align}
where $(a)$ follows from Lemma~\ref{Thm:0} under the assumption that $\T_i > 1\ \forall\ i$, $(b)$ follows from Campbell Mecke Theorem~\cite{StoKen96}, and $(c)$ follows from the fact that the channel gains are assumed to be Rayleigh distributed. Here $\ncalL_{I_{x_i}}\left(.\right)$ is the Laplace transform of the cumulative interference from all the tiers when the randomly chosen mobile user is being served by the $i^{th}$ tier. Since the point processes are stationary, the interference does not depend on the location $x_i$. Therefore, we denote $\ncalL_{I_{x_i}}$ by $\ncalL_{I_i}$, which is given by
\begin{align}
\ncalL_{I_i}\left(s\right)&=  \prod_{j=1}^K \E_{I_i} \left[ \prod_{x_j\in\Phi_j / x_i} \exp\left(-s P_j h_{x_j} l(x_j) \right)  \right]. \nonumber 
\end{align}
Using the independence of the fading random variables $\ncalL_{I_i}\left(s\right)$ equals
\begin{align}
&    \prod_{j=1}^K \E_{\Phi_j}\left[\prod_{x_j\in\Phi_j / x_i} \E_{h}\left[\exp \left( -s P_j h_{x_j} l(x_j) \right)  \right]  \right] \nonumber \\
& \stackrel{(a)}{=}  \prod_{j=1}^K \E_{\Phi_j}\left[\prod_{x_j\in\Phi_j / x_i} \frac{1}{1+sP_j l(x_j)}  \right] \nonumber \\
& \stackrel{(b)}{=}  \prod_{j=1}^K \exp  \left( -\lambda_i \int_{\R^2} \left(1-\frac{1}{1+sP_j ||x_j||^{-\alpha}}   \right) \nrmd  x_j  \right)  \nonumber \\
& \stackrel{(c)}{=} \prod_{j=1}^K \exp  \left( -2 \pi \lambda_i (s P_j)^{2/\alpha} \int_{0}^{\infty} r \int_{0}^{\infty} e^{ \left( -t (1+r^{\alpha})\right)}\nrmd  t\ \nrmd  r   \right)
\label{eqn:L_der}
\end{align}
where  $(a)$ follows from the Rayleigh fading assumption (i.e., $h\sim \exp(1)$), $(b)$ follows from probability generating functional (PGFL) of PPP~\cite{StoKen96} and, $(c)$ results from algebraic manipulation after converting from Cartesian to polar coordinates Using some properties of Gamma function, \eqref{eqn:L_der} can be further simplified to
\begin{equation}
\ncalL_{I}(s) = \exp\left(-s^{2/\alpha}C(\alpha)\sum_{i=1 }^K \lambda_i P_i^{2/\alpha}\right),
\label{eqn:L}
\end{equation}
where $C(\alpha) = \frac{2\pi^2\csc(\frac{2\pi}{\alpha})}{\alpha}$.
Using \eqref{eqn:Pc_der} and \eqref{eqn:L} the coverage probability $\pc(\{\lambda_i\}, \{\T_i\}, \{P_i\})$ is
\begin{align*}
\sum_{i=1}^{K} \lambda_i  
& \int_{\R^2}e^{-\left(\frac{\T_i}{P_i}\right)^{2/\alpha}C(\alpha)||x_i||^2\sum_{m=1 }^K \lambda_m P_m^{2/\alpha}}  e^{\frac{-\T_i \N}{P_i }||x_i||^\alpha}\nrmd  x_i,
\end{align*}
which completes the proof.
\hfill \IEEEQEDclosed

\subsection{Proof of Proposition~\ref{Thm:3}}
Let $B_n \in \R^2$ denote an increasing sequence of convex sets with $B_n \subset B_{n+1}$ and $\lim_{n\rightarrow \infty} |B_n| = \infty$. For this proof, we denote $\sir_{x_m}(x_b)$ as the received $\sir$ when the mobile is located at $x_m \neq 0$ connects to BS located at $x_b$. Please recall that the subscript is dropped and $\sir$ is denoted as $\sir(x_b)$ when the mobile user is located at the origin. The average fraction of users served by the $j^{th}$ tier can now be expressed as:
\begin{align}
&\bar{N}_j = \lim_{n\rightarrow \infty} \frac{1}{|B_n|} \sum_{x_m\in B_n \bigcap \Phi_m} \nonumber \\
& \nb1 \left( \bigcup_{x_j \in \Phi_j}\sinr_{x_m}(x_j) > \T_j \Big| \bigcup_{i\in \calK, x_i \in \Phi_i} (\sinr_{x_m}(x_i) > \T_i) \right)   \nonumber \\
& \stackrel{(a)}{=} \P^{!o} \left(\bigcup_{x_j \in \Phi_j} \sinr(x_j) > \T_j \Big|  \bigcup_{i\in \calK, x_i \in \Phi_i} (\sinr(x_i) > \T_i) \right) \nonumber \\
& \stackrel{(b)}{=} \frac{\P \left(\bigcup\limits_{x_j \in \Phi_j} \sinr(x_j) > \T_j,\  \bigcup\limits_{i\in \calK, x_i \in \Phi_i} (\sinr(x_i) > \T_i) \right)}{\P \left(\bigcup_{i\in \mathcal{K}, x_i \in \Phi_i} (\sinr(x_i) > \T_i) \right)}   \nonumber \\
& = \frac{\P \left(\bigcup_{x_j \in \Phi_j} \sinr(x_j) > \T_j\right)}{\P \left(\bigcup_{i\in \calK, x_i \in \Phi_i} (\sinr(x_i) > \T_i) \right)}
\label{eq:avgload}
\end{align}
where $(a)$ follows from the stationarity and the ergodicity of  PPP \cite{StoKen96}. $\P^{!o}$ denotes the reduced Palm distribution of a PPP and  $(b)$ follows from the Slivinak's theorem \cite{StoKen96, KinBook} and  Bayes rule. Noting that $\P \left(\bigcup_{x_j \in \Phi_j} \sinr(x_j) > \T_j\right)$ is the probability of coverage with a single tier $j$, the result follows from Theorem \ref{Thm:1}.
\hfill \IEEEQEDclosed

\subsection{Proof of Theorem \ref{Thm:4}}
Denoting the coverage event $\bigcup_{i=1}^K \bigcup_{x \in \Phi_i} \left( \sir(x) > \T_i  \right)$ by $\mathbf{C}(\{\T_i\})$, the average rate achievable by a randomly chosen mobile user when it is under coverage can be expressed as:
\begin{equation}
\bar{R} = \E \left[ \log \left( 1+ \max_{x \in \bigcup \Phi_i } (\sir(x))\right) \Big|  \mathbf{C}(\{\T_i\})  \right].
\end{equation}
We first derive the conditional complementary cumulative density function (CCDF) of $\max_{x \in \bigcup \Phi_i } (\sir(x))$ as follows:
\begin{align}
&\P \left( \max_{x \in \bigcup \Phi_i } (\sir(x)) > T\  \Big | \ \mathbf{C}(\{\T_i\})   \right)  \nonumber \\
& \stackrel{(a)}{=} \frac{\P \left( \max_{x \in \bigcup \Phi_i } (\sir(x)) > T, \mathbf{C}(\{\T_i\})\right)}{\P(\mathbf{C}(\{\T_i\}))} \nonumber \\
&\stackrel{(b)}{=} \frac{\P \left( \mathbf{C}(\{T\}), \mathbf{C}(\{\T_i\})   \right)}{\P(\mathbf{C}(\{\T_i\}))}, \nonumber \\
&= \frac{\P \left( \mathbf{C}(\{\max(T,\T_i)\})   \right)}{\P(\mathbf{C}(\{\T_i\}))}, \nonumber \\
&\stackrel{(c)}{=}   \left\{ \begin{array}{ccc}
 \frac{\sum_{i=1}^K \lambda_i P_i^{2/\alpha} \max(\T_i, T)^{-2/\alpha}}{\sum_{i=1}^K \lambda_i P_i^{2/\alpha} \T_i^{-2/\alpha}} &; & T > \T_{min}\\
 1 & ; & \text{otherwise}
  \end{array}  \right.,
\label{eq:ccdf}
\end{align}
where $(a)$ follows from Bayes' theorem, $(b)$ follows from Lemma~\ref{Thm:0} under the assumption $\T_i>1\ \forall\ i$, $(c)$ follows from Theorem \ref{Thm:1}, and $\T_{min}$ denotes $\min\{\T_1, \T_2, \ldots, \T_K\}$.

Denoting random variable $\max_{x \in \bigcup \Phi_i } (\sir(x))$ by $X$, $\bar{R}$ can be evaluated as follows:
\begin{eqnarray}
\bar{R} &=& \int_{0}^{\infty} \log(1+x) f_X(x \mid \mathbf{C}(\{\T_i\})) \nrmd  x, \nonumber \\
& = & \int_{x=0}^{\infty} \int_{y=0}^{x} \frac{1}{1+y} f_X(x \mid \mathbf{C}(\{\T_i\}))\ \nrmd  y\ \nrmd  x, \nonumber \\
& \stackrel{(a)}{=} & \int_{y=0}^{\infty} \left( \int_{x=y}^{\infty}  f_X(x \mid \mathbf{C}(\{\T_i\}))\ \nrmd  x \right) \frac{1}{1+y}\ \nrmd  y, \nonumber \\
& = & \int_{0}^{\infty} \frac{\P\left(X>y\mid \mathbf{C}(\{\T_i\}) \right)}{1+y}  \nrmd  y,
\label{eq:Rate1}
\end{eqnarray}
where $(a)$ follows from changing the order of integration. Now we substitute \eqref{eq:ccdf} in \eqref{eq:Rate1} to get the average rate as:
\begin{align}
\bar{R} &= \int_{0}^{\T_{min}} \frac{1}{1+y}d y\ +\nonumber\\
& \frac{1}{\sum_{i=1}^K \lambda_i P_i^{2/\alpha} \T_i^{-2/\alpha}}  \sum_{i=1}^K \lambda_i P_i^{2/\alpha} \int_{\T_{min}}^{\infty} \frac{\max(\T_i, x)^{-2/\alpha}}{1+x} \nrmd  x \nonumber \\
&= \log(1+\T_{min}) + \nonumber \\
& \frac{1}{\sum_{i=1}^K \lambda_i P_i^{2/\alpha} \T_i^{-2/\alpha}}  \sum_{i=1}^K \lambda_i P_i^{2/\alpha} \int_{\T_{min}}^{\infty} \frac{\max(\T_i, x)^{-2/\alpha}}{1+x} \nrmd  x.
\end{align}
This completes the proof.
\hfill \IEEEQEDclosed

\bibliographystyle{IEEEtran}
\bibliography{DhiGanAnd_HCN_JSAC2012}

\begin{biography}[{\includegraphics[width=1in,height
=1.25in,clip,keepaspectratio]{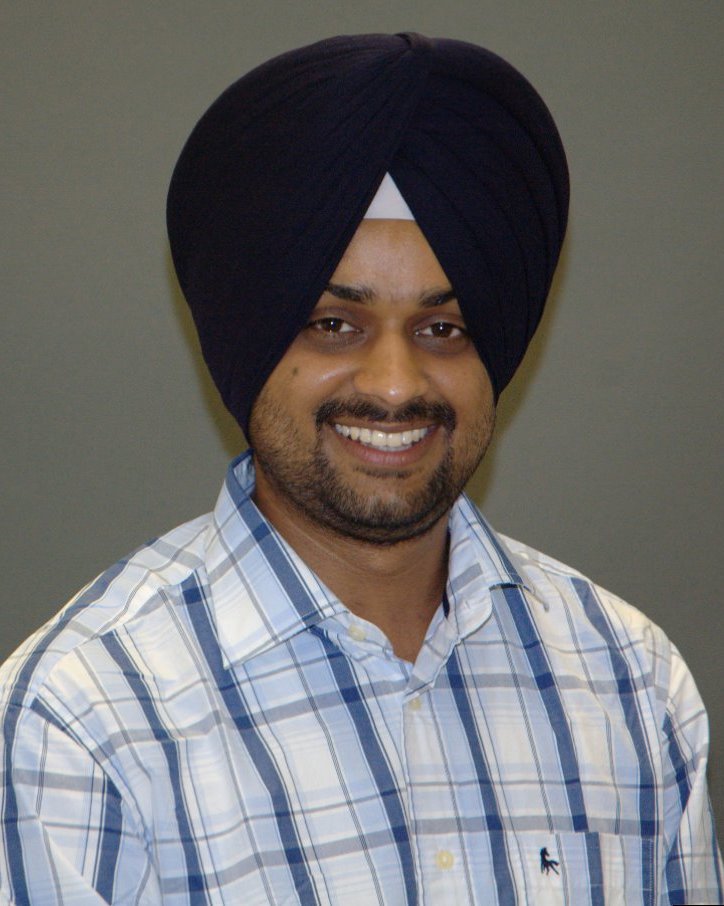}}]{Harpreet S. Dhillon}
(S'11) received the B.Tech. degree in Electronics and Communication Engineering from IIT Guwahati, India, in 2008 and the M.S. in Electrical Engineering from Virginia Tech in 2010. He is currently a Ph.D. student at The University of Texas at Austin, where his research has focused on the modeling and analysis of heterogeneous cellular networks using tools from stochastic geometry, point process theory and spatial statistics. His other research interests include interference channels, multiuser MIMO systems, cognitive radio networks and non-invasive respiration monitoring. He is the recipient of the Microelectronics and Computer Development (MCD) fellowship from UT Austin and was also awarded the Agilent Engineering and Technology Award 2008. He has held summer internships at Alcatel-Lucent Bell Labs in Crawford Hill, NJ, Qualcomm Inc. in San Diego, CA, and Cercom, Politecnico di Torino in Italy.
\end{biography}

\begin{biography}[{\includegraphics[width=1in,height
=1.25in,clip,keepaspectratio]{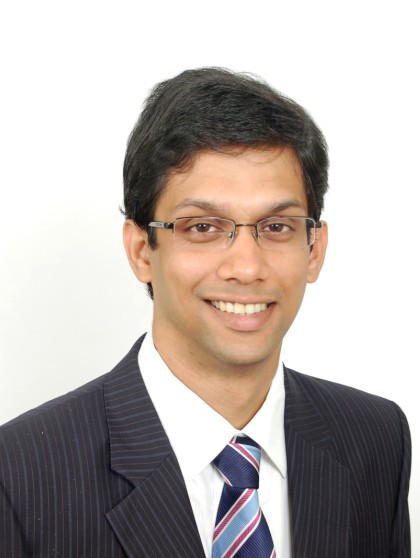}}]{Radha Krishna Ganti}
(S'01, M'10) is an Assistant Professor at the Indian Institute of Technology Madras, Chennai, India. He was a Postdoctoral researcher in the Wireless Networking and Communications Group at UT Austin from 2009-11. He received his B. Tech. and M. Tech. in EE from the Indian Institute of Technology, Madras, and a Masters in Applied Mathematics and a Ph.D. in EE form the University of Notre Dame in 2009. His doctoral work focused on the spatial analysis of interference networks using tools from stochastic geometry. He is a co-author of the monograph {\em Interference in Large Wireless Networks} (NOW Publishers, 2008).
\end{biography}

\begin{biography}[{\includegraphics[width=1in,height
=1.25in,clip,keepaspectratio]{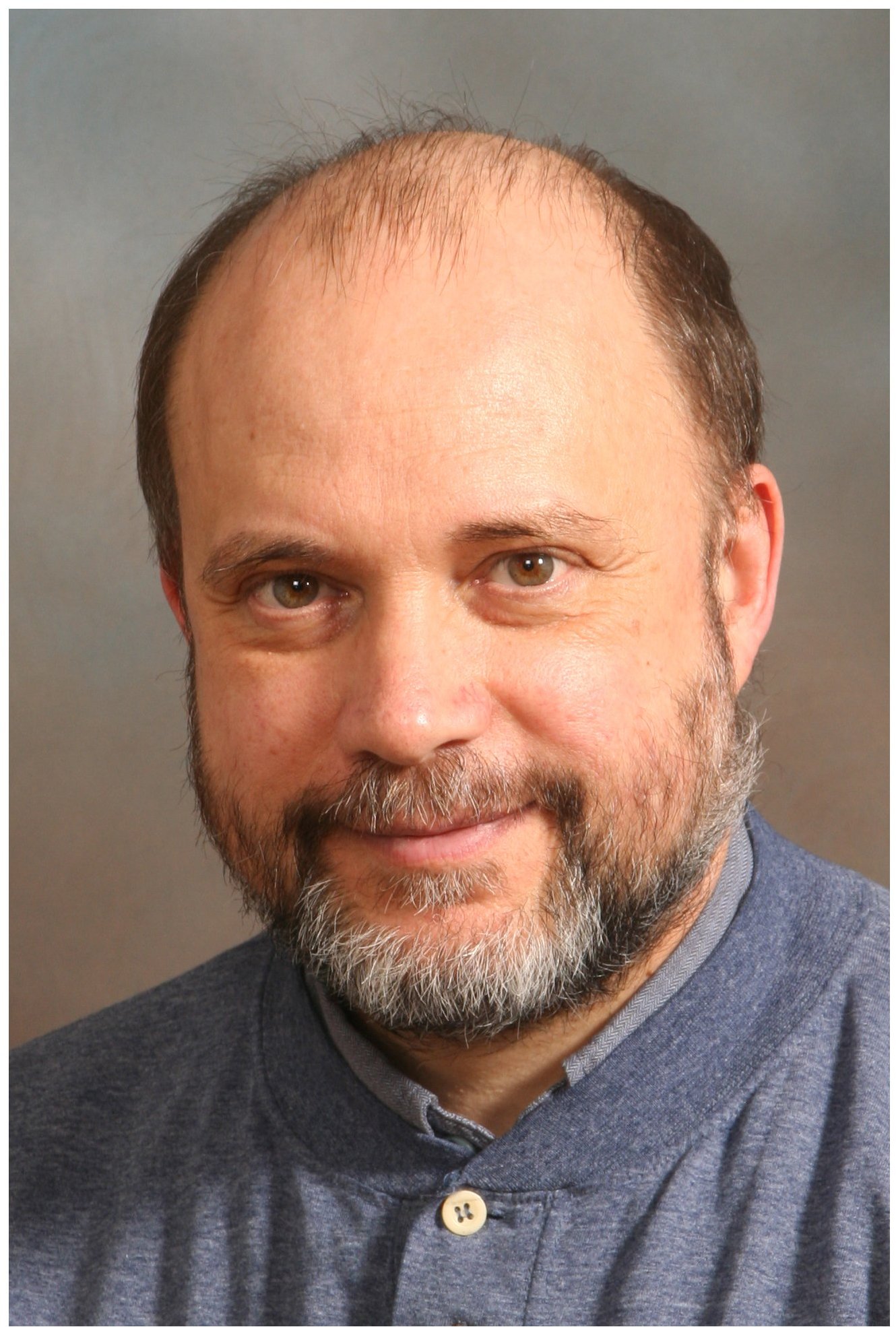}}]{Francois Baccelli}
Francois Baccelli's research interests are in the theory of discrete event dynamical networks and in the modeling and performance evaluation of computer and communication systems. He coauthored more than 100 publications in major international journals and conferences, as well as a 1994 Springer Verlag book on queueing theory, jointly with P. Bremaud, the second edition of which appeared in 2003, and a 1992 Wiley book on the max plus approach to discrete event networks, with G. Cohen, G.J. Olsder and J.P. Quadrat.

He was the head of the Mistral Performance Evaluation research group of INRIA Sophia Antipolis, France, from its creation to 1999. He was/is a partner in several European projects including IMSE (Esprit 2), ALAPEDES (TMR) and EURONGI (Network of Excellence), and was the coordinator of the BRA Qmips European project. He is currently INRIA Directeur de Recherche in the Computer Science Department of Ecole Normale Sup{\'e}rieure in Paris, where he started the TREC group (th{\'e}orie des r{\'e}seaux et communications) in 1999.

His current research interest are focused on the analysis of large wireless networks and the design of scalable and reliable application layers based on the current point to point transport mechanisms, and on the development of new tools for the modeling of protocols with spatial components in wireless networks such as coverage and power control in CDMA and MAC protocols in mobile had hoc networks. Prof. Baccelli was awarded the 2002 France Telecom Prize and got the IBM Award in 2003. He became a member of the French Academy of Sciences in 2005.
\end{biography}

\begin{biography}[{\includegraphics[width=1in,height
=1.25in,clip,keepaspectratio]{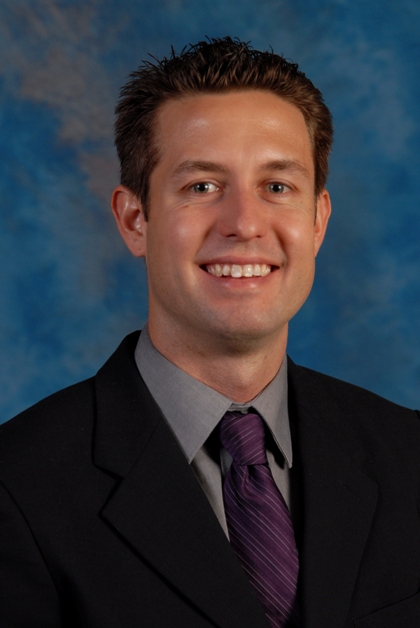}}]{Jeffrey G. Andrews} (S'98, M'02, SM'06) received the B.S. in Engineering with High Distinction from Harvey Mudd College in 1995, and the M.S. and Ph.D. in Electrical Engineering from Stanford University in 1999 and 2002, respectively.  He is an Associate Professor in the Department of Electrical and Computer Engineering at the University of Texas at Austin, where he was the Director of the Wireless Networking and Communications Group (WNCG) from 2008-12. He developed Code Division Multiple Access systems at Qualcomm from 1995-97, and has consulted for entities including the WiMAX Forum, Microsoft, Apple, Clearwire, Palm, Sprint, ADC, and NASA.

Dr. Andrews is co-author of two books, Fundamentals of WiMAX (Prentice-Hall, 2007) and Fundamentals of LTE (Prentice-Hall, 2010), and holds the Earl and Margaret Brasfield Endowed Fellowship in Engineering at UT Austin, where he received the ECE department's first annual High Gain award for excellence in research. He is a Senior Member of the IEEE, served as an associate editor for the IEEE Transactions on Wireless Communications from 2004-08, was the Chair of the 2010 IEEE Communication Theory Workshop, and is the Technical Program co-Chair of ICC 2012 (Comm. Theory Symposium) and Globecom 2014.  He has also been a guest editor for two recent IEEE JSAC special issues on stochastic geometry and femtocell networks.

Dr. Andrews received the National Science Foundation CAREER award in 2007 and has been co-author of five best paper award recipients, two at Globecom (2006 and 2009), Asilomar (2008), the 2010 IEEE Communications Society Best Tutorial Paper Award, and the 2011 Communications Society Heinrich Hertz Prize.  His research interests are in communication theory, information theory, and stochastic geometry applied to wireless cellular and ad hoc networks.
\end{biography}

\end{document}